\documentclass[11pt,a4paper]{article}

\usepackage[a4paper,margin=1in]{geometry}
\usepackage[T1]{fontenc}
\usepackage[utf8]{inputenc}
\usepackage{lmodern}
\usepackage{microtype}
\usepackage{amsmath,amssymb,amsthm,mathtools}
\usepackage{bm}
\usepackage{mathrsfs}
\usepackage{graphicx}
\usepackage{booktabs}
\usepackage{array}
\usepackage{xcolor}
\usepackage{placeins}
\usepackage{needspace}
\usepackage[
  colorlinks=true,
  linkcolor=blue,
  citecolor=blue,
  urlcolor=blue
]{hyperref}

\allowdisplaybreaks[2]

\newtheorem{theorem}{Theorem}[section]
\newtheorem{proposition}[theorem]{Proposition}

\newtheorem{lemma}[theorem]{Lemma}

\theoremstyle{definition}

\newcommand{\Mfp}{\mathcal{M}}
\newcommand{\Aff}{\mathfrak{A}}
\newcommand{\Gau}{\mathfrak{G}}
\newcommand{\Hor}{\mathcal{H}}
\newcommand{\Hilb}{\mathscr{H}}
\newcommand{\Qdom}{\mathcal{Q}}
\newcommand{\ad}{\operatorname{ad}}
\newcommand{\Ran}{\operatorname{Ran}}
\newcommand{\Ker}{\operatorname{Ker}}
\newcommand{\spec}{\operatorname{spec}}

\newcommand{\sgn}{\operatorname{sgn}}

\newcommand{\one}{\mathbb{1}}
\newcommand{\dd}{\mathrm{d}}
\newcommand{\iu}{\mathrm{i}}
\newcommand{\norm}[1]{\left\lVert#1\right\rVert}
\newcommand{\abs}[1]{\left\lvert#1\right\rvert}
\newcommand{\inner}[2]{\left\langle#1,#2\right\rangle}
\newcommand{\ket}[1]{\lvert#1\rangle}

\newcommand{\dyad}[2]{\lvert#1\rangle\langle#2\rvert}

\begin{document}

\title{Some Remarks on the Spectral Geometry of the Gribov Horizon}

\author{
Daniel G. Tedesco\\
\small ESEHL/PPGENT, UNINTER, Curitiba, Paran\'a, Brazil\\
\small \texttt{daniel.te@uninter.com}
}

\date{}
\maketitle

\begin{abstract}
We develop a local spectral framework for the Landau-gauge Gribov horizon that
distinguishes gauge-orbit projection from degeneracy of the gauge-fixing
Hessian. On a flat torus, spatially constant ghosts form a residual global-color
kernel; after removal of this kernel, the reduced Faddeev-Popov operator is the
normal Morse-Bott Hessian of the orbit-norm functional, whereas the covariant
Laplacian defines the orthogonal connection. For the associated affine focal
pencil, we prove a quadratic-form index theorem with a Morse-Bott endpoint. At
a regular isolated crossing, the critical projector $P$ and invertible
compressed derivative $\Gamma=P\dot{\Mfp}P$ determine the spectral-flow jump and
leading Laurent coefficient of the sourced ghost resolvent; for a simple zero,
they also give the wall conormal. Crossings reached
along affine rays from the positive region have negative-definite $\Gamma$,
including symmetry-protected multiplets, while a second Schur reduction
determines pole orders along tangential paths. A projected single-harmonic model
checks the projected Feynman-Hellmann relation. For an $SU(2)$ hedgehog on
$\mathbb R^3$,
we construct a normalizable threshold state in every coupled spin-orbit channel
and minimize over the full tower to obtain the exact stability interval
$-2<g<1$. Dirichlet-box spectra approach these thresholds and serve as
finite-volume comparisons; no numerical fit enters the continuum result.
\end{abstract}

\noindent\textbf{Keywords:} Gribov horizon; Landau gauge; Faddeev-Popov
operator; crossing form; spectral flow; ghost resolvent; focal point; Henyey
zero mode.

\section{Introduction}
\label{sec:introduction}

Gauge fixing provides local coordinates on the configuration space of a
non-Abelian gauge theory, but it does not furnish a global section of the gauge
bundle. Gribov first showed that the Landau condition may intersect a single
gauge orbit more than once, thereby invalidating its interpretation as a global
coordinate condition~\cite{Gribov1978}. Singer subsequently identified the
topological obstruction to a continuous global gauge section on the
irreducible configuration space~\cite{Singer1978}. Geometric approaches related
these observations to the Riemannian geometry of orbit space, tthe restriction of the norm functional to a gauge orbit, focal points, and gauge-invariant coordinate descriptions
of the orbit-space quotient ~\cite{NarasimhanRamadas1979,BabelonViallet1981,
OrlandMetric1997,OrlandSemenoff2000,OrlandGaugeCoordinates2004}. Within this framework, the first Gribov region is bounded by configurations for which the
lowest nontrivial eigenvalue of the Landau Faddeev-Popov operator vanishes
~\cite{MaskawaNakajima1978,MaskawaNakajima1980,SemenovFranke1982,DellAntonioZwanziger1991,vanBaal1992}.

The quantum implementation of the restriction to the first Gribov region is
addressed by the Gribov-Zwanziger program. Its horizon condition, localizable
action, relation to the no-pole condition, and later BRST-invariant formulations
concern the functional measure and its gauge dependence
~\cite{Zwanziger1989,VandersickelZwanziger2012,CapriNoPole2013,CapriBRST2016,CapriUniversal2018}.
We examine the local spectral geometry of a fixed smooth Landau-gauge
background. This level of analysis determines the singular response of the
ghost operator and the orientation of a prescribed path through the first
horizon; it does not determine the global modular region or the statistical
weight assigned to configurations near its boundary. Lattice studies connect
the lowest nontrivial Faddeev-Popov eigenvalues with the ghost sector and the
approach to the first horizon
~\cite{Greensite2010,CucchieriMendesGhost2013,CucchieriMendesCrossing2014},
while Henyey-type constructions provide analytic backgrounds on which the
corresponding local statements can be tested
~\cite{Henyey1979,GuimaraesSorella2011,CapriGuimaraesSorellaTedesco2012,Capri:2013vka}.
Earlier one-mode and Morse-theoretic analyses likewise describe nontrivial local
geometry before a functional-integral interpretation is imposed
~\cite{Carson1986,Landim2014}. The aim here is to place these ingredients in an
operator framework applicable to both analytic and finite-volume calculations.

The construction begins with a distinction that is sometimes obscured because
two elliptic operators act on the same adjoint scalar fields. The orthogonal
decomposition of a tangent vector into orbit-tangent and orbit-normal parts is
controlled by the covariant Laplacian
\begin{equation}
  \Delta_A=D_A^\ast D_A.
  \label{eq:intro-delta}
\end{equation}
By contrast, degeneracy of the Landau condition is controlled by the
Faddeev-Popov operator
\begin{equation}
  \Mfp[A]=-\partial_\mu D_\mu[A].
  \label{eq:intro-fp}
\end{equation}
On a transverse background, these operators are related by
\begin{equation}
  \Mfp[A]=\Delta_A+\ad_{A_\mu}D_\mu.
  \label{eq:intro-two-operators}
\end{equation}
After the Hessian of the orbit norm is pulled back to ghost parameters, the
first term in Eq.~\eqref{eq:intro-two-operators} is the orbit-metric
contribution, whereas the second is the second fundamental form paired with the
normal vector $A$. Consequently, a nontrivial Faddeev-Popov zero mode at an
irreducible connection need not belong to $\Ker\Delta_A$. The orthogonal
connection on gauge-field space may therefore remain regular at a Landau-gauge
horizon even though the ghost inverse is singular there.

On a torus, the Faddeev-Popov operator also possesses a residual kernel that is
present at every Landau configuration: every spatially constant adjoint scalar
is annihilated by $\Mfp[A]$. These directions generate global color rotations
of the norm functional, whose critical set is therefore Morse-Bott rather than
Morse. We remove this fixed symmetry kernel before defining the first Gribov
region. Unless stated otherwise, all eigenvalues, kernels, indices, and
determinants below refer to the reduced ghost space. A point lies on the first
nontrivial horizon only when the reduced operator acquires an additional zero
mode.

The local spectral data of such a crossing consist of the critical projector
$P$ and the compressed first variation
\begin{equation}
  \Gamma_B=P\,\delta_B\Mfp[A_\ast]\,P.
  \label{eq:intro-crossing-matrix}
\end{equation}
For a normalized simple zero mode $f_0$, this operator-valued crossing form
reduces to the scalar
\begin{equation}
  \Gamma_A(B;f_0)
  =\inner{f_0}{\delta_B\Mfp[A]f_0}
  =\int_M(\partial_\mu f_0^a)[B_\mu,f_0]^a\,\dd^d x.
  \label{eq:intro-crossing-scalar}
\end{equation}
Together, $P$ and $\Gamma_B$ determine the conormal to a simple spectral wall,
the orientation of the corresponding spectral-flow event, and the leading
Laurent coefficient of a sourced inverse. The projector is indispensable in
the last statement because the residue depends on the component of the source
inside the critical eigenspace.

Affinity yields a direct operator bound along affine rays. Suppose that
$\Mfp_t=\Mfp_0+tV$ and that $\Mfp_0\ge\lambda_0(0)>0$ on the reduced ghost
space. If $P$ projects onto $\Ker\Mfp_{t_\ast}$ at some $t_\ast>0$, affinity
gives the exact compression identity and bound
\begin{equation}
  PVP=-\frac{1}{t_\ast}P\Mfp_0P
  \le -\frac{\lambda_0(0)}{t_\ast}P.
  \label{eq:intro-affine-bound}
\end{equation}
Thus every zero crossing reached along a ray from a positive configuration is
regular in the radial direction, including crossings with a degenerate
critical eigenspace. The same affine dependence makes the lowest eigenvalue a
concave function along line segments and recovers the convexity of the first
Gribov region. This statement applies to the specified outward ray; it does not
exclude directions tangent to the spectral wall in the full configuration
space.

To determine the associated ghost singularity, we use the
Feshbach-Schur map~\cite{GriesemerHasler2008}. At an isolated crossing, the
coupling between the critical eigenspace and its spectral complement vanishes
at threshold, so the self-energy first contributes at second order in the path
parameter. This fact accounts for the appearance of $\Gamma_B^{-1}$ in the
coefficient of the simple pole at a regular crossing. If the crossing form has
a nontrivial kernel along a general analytic path, however, the regular part of
the critical space must be eliminated by a second Schur complement. The first
nonzero coefficient of the resulting twice-reduced operator then fixes the pole
order and its leading coefficient. A Moore-Penrose pseudoinverse of
$\Gamma_B$ does not, in general, encode this higher-order reduction.

Two complementary models delimit the reach of the analysis. A
single-harmonic, Cartan-valued background on $T^2$ produces a reducible,
sector-projected Mathieu-type matrix. It serves as a controlled test of the
Feynman-Hellmann identity and projected spectral convergence, but it is not
interpreted as an irreducible Gribov horizon. An $SU(2)$ hedgehog on
$\mathbb R^3$ instead
reduces the Faddeev-Popov equation to radial spin-orbit channels. Its zero
modes sit at the threshold of the essential spectrum, so the continuum problem
is an analogue of the compact-operator results rather than a direct instance of
them; the corresponding Dirichlet-box realization supplies the finite-volume
test. For one Henyey family, a normalizable threshold state can be constructed
in every coupled channel, and minimization over the complete channel tower
places the first thresholds at $g=1$ and $g=-2$. The resulting exact interval
$-2<g<1$ contains no negative radial eigenvalue. Finally, the protected $J=1$
triplet is used to display symmetry-allowed splitting matrices in
representation space. These matrices are illustrative: no transverse
gauge-field deformation that realizes them is claimed.

The principal results established here are the form-domain focal index theorem
with explicit Morse-Bott endpoint splitting, the twice-reduced resolvent
analysis of tangential crossings, and the all-channel minimization of the
hedgehog thresholds. The Morse interpretation of the gauge-fixing Hessian,
local Faddeev-Popov spectral analysis, Feshbach reduction, and Henyey zero modes provide the established framework in which these results are proved
~\cite{BabelonViallet1981,OrlandSemenoff2000,Carson1986,
CucchieriMendesCrossing2014,Landim2014,
CapriGuimaraesSorellaTedesco2012,Capri:2013vka}.

\section{Functional setting and gauge-orbit geometry}
\label{sec:setting}

Let $M=T_L^d$ be the flat Euclidean torus of side length $L$, take the gauge
group to be $SU(N)$, and work on the globally trivial adjoint bundle. We use
smooth fields for the geometric arguments and pass to Sobolev completions for
the operator statements. The second-order elliptic operators have domain
$H^2(M,\mathfrak{su}(N))$ in the Hilbert space
\begin{equation}
  \Hilb=L^2(M,\mathfrak{su}(N)),
  \qquad
  \inner{f}{g}=\int_M\operatorname{tr}(f^\dagger g)\,\dd^d x,
  \label{eq:l2-space}
\end{equation}
and their closed quadratic forms have common domain $H^1$. Periodic boundary
conditions remove all boundary contributions from the integrations by parts
used below.

We adopt the conventions
\begin{equation}
  D_\mu[A]=\partial_\mu+\ad_{A_\mu},
  \qquad
  D_A^\ast=-D_\mu,
  \qquad
  \partial_\mu A_\mu=0.
  \label{eq:conventions}
\end{equation}
The last equality is the Landau condition. If
$c\in\mathfrak{su}(N)$ is constant on $M$, transversality implies
\begin{equation}
  \Mfp[A]c
  =-\partial_\mu[A_\mu,c]
  =-[\partial_\mu A_\mu,c]
  =0.
  \label{eq:constant-zero-modes}
\end{equation}
Denote the finite-dimensional space of these constant sections by
$\mathfrak g_{\mathrm{const}}$, and let $\Pi_0$ be the orthogonal projector onto
its complement. The reduced ghost Hilbert space and reduced Faddeev-Popov
operator are
\begin{equation}
  \Hilb_0=\Hilb\ominus\mathfrak g_{\mathrm{const}},
  \qquad
  \widehat{\Mfp}[A]=
  \Pi_0\Mfp[A]\Pi_0\big|_{\Hilb_0}.
  \label{eq:reduced-ghost-space}
\end{equation}
Because $\Mfp[A]$ is self-adjoint and annihilates
$\mathfrak g_{\mathrm{const}}$, the orthogonal complement $\Hilb_0$ is
invariant. Hence $\widehat{\Mfp}[A]$ is the restriction of $\Mfp[A]$ to
$\Hilb_0$; the projectors in Eq.~\eqref{eq:reduced-ghost-space} keep the symmetry
reduction explicit. Writing $\lambda_0(A)$ for the lowest eigenvalue of the
reduced operator, we define
\begin{equation}
  \Omega=\{A:\partial\!\cdot A=0,\ \lambda_0(A)>0\}.
  \label{eq:first-region}
\end{equation}
The boundary of $\Omega$ is the first nontrivial Gribov horizon. The terminology
``first Gribov region'' is also commonly applied to its closure, obtained by
replacing the strict inequality with $\lambda_0\ge0$.

The kernel of the covariant Laplacian is governed by a different condition.
Indeed,
\begin{equation}
  \inner{\xi}{\Delta_A\xi}
  =\inner{D_A\xi}{D_A\xi}\ge0,
  \qquad
  \Ker\Delta_A=\Ker D_A.
  \label{eq:delta-positive}
\end{equation}
A connection is irreducible when it has no covariantly constant adjoint scalar
after removal of the infinitesimal center. For $SU(N)$, $\Delta_A$ is therefore
strictly positive on the irreducible stratum. The constant ghosts in
Eq.~\eqref{eq:constant-zero-modes} are generally not covariantly constant, so
their presence in $\Ker\Mfp[A]$ is consistent with the strict positivity of
$\Delta_A$.

\subsection{The orthogonal connection and the Landau operator}
\label{subsec:two-operators}

Let $\Aff$ denote the affine space of connections and $\Gau$ the gauge group.
At an irreducible connection, the $L^2$ metric gives the orthogonal decomposition
\begin{equation}
  T_A\Aff=\Ran D_A\oplus\Ker D_A^\ast.
  \label{eq:orthogonal-splitting}
\end{equation}
For $a\in T_A\Aff$, write $a=a_H+D_A\xi$ with
$a_H\in\Ker D_A^\ast$. Applying $D_A^\ast$ and using the invertibility of
$\Delta_A=D_A^\ast D_A$ at an irreducible connection determines the vertical
parameter:
\begin{equation}
  \xi=(D_A^\ast D_A)^{-1}D_A^\ast a.
  \label{eq:vertical-parameter}
\end{equation}
The corresponding orthogonal connection form and horizontal projector are
therefore
\begin{equation}
  \omega_A(a)=\Delta_A^{-1}D_A^\ast a,
  \qquad
  \Pi_A^H a=a-D_A\Delta_A^{-1}D_A^\ast a.
  \label{eq:singer-connection}
\end{equation}
This local representative of Singer's orthogonal connection depends on the
inverse covariant Laplacian. Its local definition must be distinguished from
Singer's global topological obstruction, which is not caused by a local failure
of $\Delta_A^{-1}$ on the irreducible stratum.

Expanding $D_\mu D_\mu$ and comparing the result with
$-\partial_\mu D_\mu$ gives
\begin{align}
  D_A^\ast D_A\xi
  &=-\partial_\mu D_\mu\xi-[A_\mu,D_\mu\xi],
  \label{eq:two-operator-identity-a}\\
  \Mfp[A]
  &=\Delta_A+\ad_{A_\mu}D_\mu.
  \label{eq:two-operator-identity-b}
\end{align}
The two operators have the same principal symbol, $-\partial^2$, but differ in
their lower-order terms. Accordingly, if $\Mfp[A]\xi=0$ at an irreducible
background, then
\begin{equation}
  \Delta_A\xi=-\ad_{A_\mu}D_\mu\xi,
  \label{eq:delta-on-fp-mode}
\end{equation}
and the right-hand side need not vanish. A nontrivial Landau-gauge horizon can
therefore be encountered while the orthogonal connection remains regular.

For horizontal constant extensions
$\tau_1,\tau_2\in\Ker D_A^\ast$, the curvature of
Eq.~\eqref{eq:singer-connection} is
\begin{equation}
  \mathcal R_A(\tau_1,\tau_2)
  =-2\Delta_A^{-1}[\tau_{1\mu},\tau_{2\mu}].
  \label{eq:singer-curvature}
\end{equation}
Equation~\eqref{eq:singer-curvature} identifies the inverse covariant Laplacian
that enters the orbit-space geometry
~\cite{BabelonViallet1981,OrlandMetric1997,
OrlandGaugeCoordinates2004,ONeill1966,MoncriefMariniMaitra2019}. Its regularity
is fixed by irreducibility, whereas the ghost singularity analyzed below is
controlled by $\widehat{\Mfp}[A]^{-1}$.

\subsection{The norm functional and its normal Hessian}
\label{subsec:norm}

Consider the squared norm restricted to the gauge orbit of $A$,
\begin{equation}
  V_A[U]=\frac12\norm{A^U}^2.
  \label{eq:norm-functional}
\end{equation}
For the one-parameter family $U(t)=e^{t\xi}$, the tangent to the orbit at
$t=0$ is $D_A\xi$. The first variation is
\begin{equation}
  \frac{\dd}{\dd t}V_A[e^{t\xi}]\Big|_{t=0}
  =\inner{A}{D_A\xi}
  =-\inner{\partial_\mu A_\mu}{\xi},
  \label{eq:norm-first-variation}
\end{equation}
so the Landau condition characterizes the critical points of the orbit norm.
Differentiating once more yields
\begin{align}
  \frac{\dd^2}{\dd t^2}V_A[e^{t\xi}]\Big|_{t=0}
  &=\norm{D_A\xi}^2+
    \int_M A_\mu^a[(D_A\xi)_\mu,\xi]^a\,\dd^d x
  \notag\\
  &=\inner{\xi}{\bigl(\Delta_A+\ad_{A_\mu}D_\mu\bigr)\xi}
   =\inner{\xi}{\Mfp[A]\xi}.
  \label{eq:norm-second-variation}
\end{align}
This identification of the Landau Faddeev-Popov operator with the
second variation of the orbit-norm functional, together with its role
in testing local minimality along a gauge orbit, was used explicitly
in the orbit-space analysis of Ref.~\cite{OrlandSemenoff2000}. Global color rotations leave $V_A$ unchanged and generate the fixed kernel in
Eq.~\eqref{eq:constant-zero-modes}. The normal Hessian of the resulting
Morse-Bott critical manifold is consequently $\widehat{\Mfp}[A]$ acting on
$\Hilb_0$.

After pullback from orbit tangents to ghost parameters by $D_A$, the
Hessian of the orbit-norm functional decomposes as
\begin{equation}
  \operatorname{Hess}V_A(D_A\xi,D_A\eta)
  =\underbrace{\inner{\eta}{\Delta_A\xi}}_{\text{orbit metric}}
   +\underbrace{\inner{\eta}{\ad_{A_\mu}D_\mu\xi}}_
     {\text{second fundamental form paired with }A}.
  \label{eq:gauss-decomposition}
\end{equation}
This geometric interpretation applies to the pulled-back bilinear form. The
operator difference in Eq.~\eqref{eq:two-operator-identity-b} should not be
identified, without qualification, with a shape operator on all of
$T_A\Aff$. Only when the chosen representative realizes the absolute minimum of
the norm along its gauge orbit can $V_A$ be identified with the squared
orbit-space distance from the pure-gauge orbit
~\cite{OrlandMetric1997,OrlandSemenoff2000}.
Related hypersurface descriptions of gauge-orbit space, including
non-Abelian Hodge-type coordinates, intrinsic and extrinsic curvature,
and coordinate degeneracies associated with the Gribov boundary, were
developed in ~\cite{OrlandGaugeCoordinates2004}.

\subsection{The focal pencil and its Morse-Bott endpoint}
\label{subsec:focal-pencil}

Fix the affine slice through $A$,
\begin{equation}
  S_A=\{A+\tau:D_A^\ast\tau=0\},
  \label{eq:fixed-slice}
\end{equation}
and consider the normal segment $A_s=(1-s)A$, $0\le s\le1$, directed from
$A$ toward the origin. The orbit through $A_s$ is tangent to this fixed slice
when an infinitesimal gauge parameter $\xi$ satisfies
$D_A^\ast D_{A_s}\xi=0$. The operator pencil encoding this tangency condition
is
\begin{equation}
  T_A(s)=D_A^\ast D_{(1-s)A}.
  \label{eq:full-focal-pencil}
\end{equation}
Using transversality, one obtains
\begin{equation}
  D_A^\ast\ad_A+\ad_{A_\mu}D_\mu=0,
  \label{eq:ad-identity}
\end{equation}
and hence
\begin{equation}
  T_A(s)=\Delta_A+sC_A,
  \qquad
  C_A=\ad_{A_\mu}D_\mu,
  \qquad
  T_A(0)=\Delta_A,\quad T_A(1)=\Mfp[A].
  \label{eq:focal-pencil}
\end{equation}
For irreducible $A$, $\Delta_A$ is strictly positive on $\Hilb$, since a zero
mode would be a covariantly constant adjoint section and thus an infinitesimal
stabilizer of the connection. We equip its form domain
$\Qdom(\Delta_A)=H^1$ with the energy norm
$\norm{\xi}_{A}=\norm{\Delta_A^{1/2}\xi}$.

The geometric count requires the full pencil in Eq.~\eqref{eq:focal-pencil}.
For a generic background, neither $\Delta_A$ nor $C_A$ preserves $\Hilb_0$;
compressing $T_A(s)$ by $\Pi_0$ along the open segment could therefore create
singular parameters that do not represent tangencies between the gauge orbit
and the fixed slice. Reduction by global color is legitimate at the endpoint
$s=1$, where the self-adjoint Faddeev-Popov operator supplies the required
invariant orthogonal decomposition.

\begin{theorem}[Focal index theorem with a Morse-Bott endpoint]
\label{thm:focal-index}
Let $A$ be a smooth, transverse, irreducible connection on $T_L^d$. Define
\begin{equation}
  K_A=\Delta_A^{-1/2}C_A\Delta_A^{-1/2}
  \quad\text{on }\Hilb.
  \label{eq:KA}
\end{equation}
Then $K_A$ is compact and self-adjoint. The full pencil counts the reduced Morse
index according to
\begin{equation}
  n_-\bigl(\widehat{\Mfp}[A]\bigr)
  =n_-\bigl(\Mfp[A]\bigr)
  =\sum_{0<s<1}\dim\Ker T_A(s).
  \label{eq:focal-index}
\end{equation}
The sum counts focal parameters with multiplicity. At the endpoint,
\begin{equation}
  \Ker T_A(1)
  =\mathfrak g_{\mathrm{const}}\oplus
    \Ker\widehat{\Mfp}[A],
  \qquad
  \dim\Ker T_A(1)-\dim\mathfrak g_{\mathrm{const}}
  =\dim\Ker\widehat{\Mfp}[A].
  \label{eq:focal-endpoint-splitting}
\end{equation}
\end{theorem}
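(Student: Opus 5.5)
The proof rests on a Birman–Schwinger reduction to the bounded operator $K_A$. Because $A$ is irreducible, $\Delta_A\ge c>0$ on $\Hilb$. Its form domain is $H^1$, and the energy norm $\norm{\cdot}_A$ is equivalent to the $H^1$ norm.

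\textbf{Step 1: $K_A$ is self-adjoint and compact.} Self-adjointness of $C_A$ in the form sense follows from the identity $C_A=\Mfp[A]-\Delta_A$. Both terms are symmetric on $H^2$, which is where transversality enters. Hence $K_A$ is symmetric. The next task is boundedness and compactness. The operator $C_A=\ad_{A_\mu}D_\mu$ is first order with smooth coefficients, so it maps $H^1\to L^2$ boundedly. Write $K_A=\Delta_A^{-1/2}\,C_A\,\Delta_A^{-1/2}$. Here $\Delta_A^{-1/2}:L^2\to H^1$ is bounded, $C_A:H^1\to L^2$ is bounded, and $\Delta_A^{-1/2}:L^2\to H^1\hookrightarrow L^2$ is compact by Rellich on the torus. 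Therefore $K_A$ is compact. A bounded symmetric operator is self-adjoint.

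\textbf{Step 2: Birman–Schwinger correspondence.} Write $T_A(s)=\Delta_A^{1/2}(1+sK_A)\Delta_A^{1/2}$ as an identity of closed forms on $H^1$. The substitution $\xi=\Delta_A^{-1/2}\eta$ then gives two facts. First, for $s\in(0,1]$, $\dim\Ker T_A(s)=\dim\Ker(K_A+s^{-1})$. Second, by Sylvester's law of inertia for forms, $n_-(T_A(s))=n_-(1+sK_A)$. Elliptic regularity identifies form-kernel elements with $H^2$ kernel elements. Since $K_A$ is compact, its spectrum below $-1$ consists of finitely many eigenvalues $\kappa_j<-1$ of finite multiplicity. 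The condition $1+sK_A$ singular at some $s\in(0,1)$ means exactly that $-1/s=\kappa_j<-1$. Hence
\[ n_-(\Mfp[A])=n_-(1+K_A)=\#\{\kappa_j<-1\}=\sum_{0<s<1}\dim\Ker T_A(s). \]
The condition $\kappa=-1$ corresponds to $s=1$ and contributes to the kernel rather than the index, so the endpoint is correctly excluded from the sum. Equivalently, the count is monotone: each eigenvalue $1+s\kappa_j$ with $\kappa_j<0$ is strictly decreasing in $s$ and crosses zero exactly once, transversally.

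\textbf{Step 3: endpoint splitting.} Use Eq.~\eqref{eq:constant-zero-modes} together with self-adjointness of $\Mfp[A]$. These give the invariant decomposition $\Hilb=\mathfrak g_{\mathrm{const}}\oplus\Hilb_0$, and $\Mfp[A]$ vanishes on the first summand. Hence $\Ker T_A(1)=\Ker\Mfp[A]=\mathfrak g_{\mathrm{const}}\oplus\Ker\widehat{\Mfp}[A]$, and the dimension identity follows. The negative spectral subspace lies entirely in $\Hilb_0$, so $n_-(\widehat{\Mfp}[A])=n_-(\Mfp[A])$.

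\textbf{Main obstacle.} The delicate point is the form-level bookkeeping in Step 2. One must make sure that $\Delta_A^{-1/2}$ is a bijection from $L^2$ onto $H^1$, so that negative subspaces and kernels transfer without loss. One must also make sure that form kernels of $T_A(s)$ coincide with operator kernels. To settle the first point, I would use the $H^1$-norm equivalence supplied by irreducibility. For the second, I would use elliptic regularity of the second-order operator $T_A(s)$, whose principal symbol is $-\partial^2$. The operator $T_A(s)$ is not symmetric for intermediate $s$ unless the transversality identity \eqref{eq:ad-identity} is used, and that identity is exactly what makes the form representation symmetric.
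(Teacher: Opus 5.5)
Your proposal is correct and follows the paper's proof essentially step for step. Both arguments obtain compactness of $K_A$ from the first-order character of $C_A$, and both transport the pencil's closed form through the energy isometry $\Delta_A^{1/2}$ to $\one+sK_A$. Both then count the eigenvalues $\kappa_j<-1$: you invoke Sylvester's law of inertia where the paper invokes min-max, which amounts to the same thing. The endpoint splitting $\Hilb=\mathfrak g_{\mathrm{const}}\oplus\Hilb_0$ at $s=1$ is likewise identical. Your explicit appeals to Rellich compactness and to elliptic regularity for form kernels make precise points the paper leaves implicit.
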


\begin{proof}
The quadratic form associated with $C_A$ is symmetric because both
$\Mfp[A]$ and $\Delta_A$ are symmetric on a transverse background. Since
$C_A$ has differential order one, the sandwiched operator in
Eq.~\eqref{eq:KA} has order $-1$ and is compact on the compact manifold. It is
also self-adjoint. Consider the map
\begin{equation}
  U_A:\Qdom(\Delta_A)\longrightarrow\Hilb,
  \qquad
  U_A\xi=\Delta_A^{1/2}\xi.
  \label{eq:energy-isometry}
\end{equation}
When $\Qdom(\Delta_A)$ carries the energy norm, $U_A$ is an isometric
isomorphism with inverse $\Delta_A^{-1/2}:\Hilb\to\Qdom(\Delta_A)$. This is an
isomorphism between the energy space and $\Hilb$; it does not treat the
unbounded operator $\Delta_A^{1/2}$ as a bounded operator on all of $\Hilb$.

Transporting the closed quadratic form of the pencil through the isometry in
Eq.~\eqref{eq:energy-isometry} gives
\begin{equation}
  t_s[\xi]
  =\inner{U_A\xi}{(\one+sK_A)U_A\xi}.
  \label{eq:form-equivalence}
\end{equation}
The form-domain isomorphism preserves both kernels and the maximal dimension of
negative subspaces. Let $\kappa_j$ denote the eigenvalues of $K_A$, counted
with multiplicity. A focal parameter $s\in(0,1)$ occurs at
$s=-1/\kappa_j$ precisely when $\kappa_j<-1$. Applying the min-max principle
to Eq.~\eqref{eq:form-equivalence} gives
\begin{equation}
  n_-\bigl(T_A(1)\bigr)
  =\#\{j:\kappa_j<-1\},
  \label{eq:minmax-index}
\end{equation}
again with multiplicity. The right-hand side is exactly the number of singular
parameters on the open segment.

At $s=1$, every constant adjoint section belongs to $\Ker\Mfp[A]$. Since
$\Mfp[A]$ is self-adjoint, $\Hilb_0$ is invariant and
$\Mfp[A]=0\oplus\widehat{\Mfp}[A]$ with respect to
$\mathfrak g_{\mathrm{const}}\oplus\Hilb_0$. The full and reduced operators
therefore have identical negative spectra, while their kernels differ by the
fixed global-color summand displayed in
Eq.~\eqref{eq:focal-endpoint-splitting}. These observations prove both
statements.
\end{proof}

The theorem is a quadratic-form result for the full geometric pencil; it does
not rely on treating an infinite-dimensional operator family as a finite
matrix. It also records the Morse-Bott character of the endpoint. If
$A\in\Omega$, the open segment contains no focal parameter. At the first
nontrivial horizon, $T_A(1)$ acquires an endpoint focal direction in addition
to the fixed global-color kernel. The residual multiplicity is removed only
after this endpoint decomposition has been made.

\subsection{Slice-adapted extraction and the geometric source}
\label{subsec:oblique}

Away from the horizon, a displacement can be decomposed relative to the fixed
Landau slice as
\begin{equation}
  a=a_T+D_A\xi,
  \qquad
  \partial_\mu a_{T\mu}=0.
  \label{eq:oblique-decomposition}
\end{equation}
Taking the ordinary divergence, rather than the covariant divergence used in
the orthogonal decomposition, gives
\begin{equation}
  \partial_\mu a_\mu=-\Mfp[A]\xi,
  \qquad
  \xi=-\widehat{\Mfp}[A]^{-1}\partial_\mu a_\mu,
  \label{eq:oblique-extraction}
\end{equation}
where $\partial\!\cdot a$ has zero spatial average and hence belongs to
$\Hilb_0$. Choosing the reduced space also fixes the otherwise undetermined
constant component of the gauge parameter. The map
\begin{equation}
  \omega_A^L(a)
  =-\widehat{\Mfp}[A]^{-1}\partial_\mu a_\mu
  \label{eq:landau-extraction}
\end{equation}
therefore extracts the vertical parameter relative to the fixed linear slice.
This construction defines a local oblique splitting. We do not identify it
with a gauge-equivariant principal connection because $\Ker\partial$ has not
been shown to define such a connection on the full gauge bundle.

Equation~\eqref{eq:landau-extraction} also identifies the geometric source of
the ghost pole. Let $A_t$ cross a simple horizon at $t=t_\ast$, and let $f_0$
be the normalized critical mode. The resolvent analysis in
Section~\ref{sec:resolvent} yields
\begin{equation}
  \omega_{A_t}^L(a)
  =-\frac{\ket{f_0}\inner{f_0}{\partial_\mu a_\mu}}
          {\Gamma(t-t_\ast)}+O(1).
  \label{eq:oblique-pole-preview}
\end{equation}
The residue depends on the longitudinal source
$\inner{f_0}{\partial_\mu a_\mu}$. A transverse probe has no such source and
therefore displays no pole in the extraction map. The singularity belongs to
the reconstruction of a gauge parameter from a displacement that leaves the
Landau slice; it is not a singular response to an arbitrary tangent
displacement.

\section{Local spectral geometry of the horizon}
\label{sec:local}

On the stratum where the lowest eigenvalue is simple, the zero set of that
eigenvalue defines a regular spectral wall whenever its differential does not
vanish. More precisely,
\begin{equation}
  \Hor_{\mathrm{reg}}
  =\{A:\partial\!\cdot A=0,\ \lambda_0(A)=0,\
       \dim\Ker\widehat{\Mfp}[A]=1,\ \dd\lambda_0(A)\ne0\}.
  \label{eq:regular-horizon}
\end{equation}
At a degenerate zero, the ordered lowest eigenvalue is the minimum of several
analytic eigenvalue branches and may therefore fail to be differentiable. The
appropriate first-order datum at such a point is the critical spectral
projector together with the associated matrix crossing form. We consequently
reserve Eq.~\eqref{eq:regular-horizon} for the simple stratum.

To determine the normal covector to this wall, consider the affine transverse
path
\begin{equation}
  A_t=A_\ast+(t-t_\ast)B,
  \qquad
  \partial_\mu B_\mu=0.
  \label{eq:affine-path}
\end{equation}
The corresponding reduced Faddeev-Popov operators have a common domain, and
their dependence on the path parameter is affine:
\begin{equation}
  \widehat{\Mfp}_t
  =\widehat{\Mfp}_{t_\ast}+(t-t_\ast)V_B,
  \qquad
  V_B=-\Pi_0\partial_\mu\ad_{B_\mu}\Pi_0.
  \label{eq:affine-fp-family}
\end{equation}
Choose a simple eigenvalue branch and its normalized eigenvector analytically
near the crossing:
\begin{equation}
  \widehat{\Mfp}_t f_t=\lambda(t)f_t,
  \qquad
  \norm{f_t}=1,
  \label{eq:eigenpair}
\end{equation}
Differentiating this equation and using self-adjointness yields the
Feynman-Hellmann identity
\begin{equation}
  \dot\lambda(t)=\inner{f_t}{V_Bf_t}.
  \label{eq:feynman-hellmann}
\end{equation}
If $f_0$ denotes the normalized zero mode at $t=t_\ast$, the crossing form is
therefore
\begin{align}
  \Gamma_A(B;f_0)
  &=\inner{f_0}{V_Bf_0}
  \notag\\
  &=-\int_M f_0^a\partial_\mu[B_\mu,f_0]^a\,\dd^d x
   =\int_M(\partial_\mu f_0^a)[B_\mu,f_0]^a\,\dd^d x.
  \label{eq:crossing-form-explicit}
\end{align}
The last equality follows by integration by parts under the standing boundary
conditions. The derivative in its final integrand is the ordinary derivative.
Replacing it by $D_A$ introduces an additional term that is generally nonzero
and hence defines a different quadratic form.

For a simple zero, the crossing form gives the leading variation of the
eigenvalue,
\begin{equation}
  \lambda(t)=\Gamma_A(B;f_0)(t-t_\ast)
             +O\bigl((t-t_\ast)^2\bigr).
  \label{eq:linear-crossing}
\end{equation}
Thus $B\mapsto\Gamma_A(B;f_0)$ is the differential of the eigenvalue at the
crossing, and the tangent space to the simple wall is its kernel within the
transverse slice:
\begin{equation}
  T_A\Hor_{\mathrm{reg}}
  =\{B:\partial\!\cdot B=0,\ \Gamma_A(B;f_0)=0\}.
  \label{eq:wall-tangent-space}
\end{equation}
For a rank-$r$ zero, choose an orthonormal basis
$\{f_\alpha\}_{\alpha=1}^r$ of the critical space and define
\begin{equation}
  P=\sum_{\alpha=1}^r\dyad{f_\alpha}{f_\alpha},
  \qquad
  \Gamma_B=PV_BP,
  \qquad
  (\Gamma_B)_{\alpha\beta}
  =\inner{f_\alpha}{V_Bf_\beta}.
  \label{eq:matrix-crossing-form}
\end{equation}
The eigenvalues of $\Gamma_B$ are the first derivatives of the analytic
critical branches along the selected path~\cite{Kato1995}. This matrix, rather
than a derivative of the ordered minimum $\lambda_0$, supplies the regular
first-order description of a degenerate crossing.

\subsection{Concavity and the affine-ray bound}
\label{subsec:concavity}

For each fixed unit vector $f\in H^1\cap\Hilb_0$, the Rayleigh form
\begin{equation}
  q_f(A)=\inner{f}{\widehat{\Mfp}[A]f}
  \label{eq:rayleigh-form}
\end{equation}
depends affinely on the transverse connection. Since the lowest eigenvalue is
the infimum of these affine functions, it is concave on the transverse slice.

\begin{proposition}[Concavity]
\label{prop:concavity}
For transverse $A,A'$ and $0\le\theta\le1$,
\begin{equation}
  \lambda_0\bigl(\theta A+(1-\theta)A'\bigr)
  \ge
  \theta\lambda_0(A)+(1-\theta)\lambda_0(A').
  \label{eq:concavity}
\end{equation}
Consequently, $\Omega$ in Eq.~\eqref{eq:first-region} is convex.
\end{proposition}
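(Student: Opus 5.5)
The plan is to write $\lambda_0$ as an infimum of the affine Rayleigh forms $q_f$ of Eq.~\eqref{eq:rayleigh-form}, and then to use the elementary fact that an infimum of affine functions is concave. I will carry this out in the following order.

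First, I record the variational characterization on the reduced space. For transverse $A$ the reduced operator $\widehat{\Mfp}[A]$ is self-adjoint on $\Hilb_0$. It is bounded below, with form domain $H^1\cap\Hilb_0$, and its spectrum is discrete because the resolvent is compact on the torus. Hence
\begin{equation*}
  \lambda_0(A)=\inf\bigl\{q_f(A): f\in H^1\cap\Hilb_0,\ \norm{f}=1\bigr\}.
\end{equation*}
The key point is that the admissible set of trial functions does not depend on $A$. The form domain is $H^1$ for every smooth background, since the $A$-dependent term $\ad_{A_\mu}D_\mu$ is of lower order. The subspace $\Hilb_0$ is likewise defined independently of $A$, because every transverse $A$ annihilates the same constant sections, as shown in Eq.~\eqref{eq:constant-zero-modes}.

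Second, I verify affinity. For fixed $f$,
\begin{equation*}
  q_f(A)=\norm{\partial f}^2+\inner{\partial_\mu f}{[A_\mu,f]}
\end{equation*}
after integrating by parts. This expression is manifestly affine in $A$. For $A_\theta=\theta A+(1-\theta)A'$ it therefore gives $q_f(A_\theta)=\theta q_f(A)+(1-\theta)q_f(A')$. Note also that $A_\theta$ is transverse, since the Landau condition is linear.

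Third, I bound each term from below: $q_f(A)\ge\lambda_0(A)$ and $q_f(A')\ge\lambda_0(A')$. This yields $q_f(A_\theta)\ge\theta\lambda_0(A)+(1-\theta)\lambda_0(A')$ for every admissible unit $f$. Taking the infimum over $f$ gives Eq.~\eqref{eq:concavity}.

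Convexity of $\Omega$ then follows immediately. If $\lambda_0(A)>0$ and $\lambda_0(A')>0$, the right-hand side of Eq.~\eqref{eq:concavity} is positive for every $\theta\in[0,1]$, so $A_\theta\in\Omega$.

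The only step requiring care is the first. I must justify that the min-max infimum may be taken over one fixed, $A$-independent set of trial vectors. The justification is common form domain plus common reduced space. I expect no genuine obstacle there.
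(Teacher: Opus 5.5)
Your proposal is correct and follows the paper's proof: $\lambda_0$ is written as the infimum of the affine Rayleigh forms $q_f$, each term of $q_f(\theta A+(1-\theta)A')=\theta q_f(A)+(1-\theta)q_f(A')$ is bounded below by the corresponding lowest eigenvalue, and the infimum over unit $f$ is then taken. You also spell out two points the paper leaves implicit. The trial set $H^1\cap\Hilb_0$ does not depend on $A$, and the Landau condition is preserved under convex combination.
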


\begin{proof}
Affinity gives
\begin{equation}
  q_f\bigl(\theta A+(1-\theta)A'\bigr)
  =\theta q_f(A)+(1-\theta)q_f(A').
  \label{eq:affine-rayleigh}
\end{equation}
Each term on the right is bounded below by the corresponding lowest
eigenvalue. Taking the infimum over normalized $f$ on the left proves
Eq.~\eqref{eq:concavity}.
\end{proof}

Along an affine ray, the same affine dependence gives a stronger statement at
the level of the full critical block.

\begin{theorem}[Negative-definite crossings on affine rays]
\label{thm:affine-ray}
Let
\begin{equation}
  \widehat{\Mfp}_t=\widehat{\Mfp}_0+tV,
  \qquad
  \widehat{\Mfp}_0\ge\lambda_0(0)\one,
  \qquad
  \lambda_0(0)>0.
  \label{eq:positive-ray-family}
\end{equation}
If $t_\ast>0$ and $P$ is the spectral projector onto
$\Ker\widehat{\Mfp}_{t_\ast}$, then
\begin{equation}
  \Gamma=PVP
  =-\frac{1}{t_\ast}P\widehat{\Mfp}_0P
  \le-\frac{\lambda_0(0)}{t_\ast}P.
  \label{eq:matrix-affine-bound}
\end{equation}
Thus $\Gamma$ is negative definite on $\Ran P$. Every zero met along an affine
ray from a configuration in $\Omega$ is a downward, regular crossing in the
radial direction.
\end{theorem}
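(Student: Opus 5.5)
The identity follows from the affine structure alone, and the inequality from the lower bound at the initial point. I would argue at the level of quadratic forms, since $V$ is first order and may be unbounded: the reduced operators share the domain $H^2\cap\Hilb_0$, and the forms share the domain $H^1\cap\Hilb_0$. The critical space $\Ran P=\Ker\widehat{\Mfp}_{t_\ast}$ lies in $H^2$ by elliptic regularity. It is also finite-dimensional, because the reduced operator has compact resolvent on the torus. Hence every compressed expression $PXP$ below is a well-defined finite matrix on $\Ran P$, and the argument can work with the entries $\inner{f}{Xg}$ for $f,g\in\Ran P$.

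\textbf{The identity.} Take $f,g\in\Ran P$. Because $\widehat{\Mfp}_{t_\ast}g=0$, the affine decomposition $\widehat{\Mfp}_{t_\ast}=\widehat{\Mfp}_0+t_\ast V$ gives
\[
0=\inner{f}{\widehat{\Mfp}_{t_\ast}g}=\inner{f}{\widehat{\Mfp}_0 g}+t_\ast\inner{f}{Vg}.
\]
Dividing by $t_\ast>0$ yields $\inner{f}{Vg}=-t_\ast^{-1}\inner{f}{\widehat{\Mfp}_0g}$ for all such $f,g$. This is exactly the operator identity $PVP=-t_\ast^{-1}P\widehat{\Mfp}_0P$ on $\Ran P$. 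Both sides vanish on $\Ran(\one-P)$ by construction, so the identity holds on all of $\Hilb_0$.

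\textbf{The bound.} For $f\in\Ran P$, the hypothesis $\widehat{\Mfp}_0\ge\lambda_0(0)\one$ gives $\inner{f}{\widehat{\Mfp}_0f}\ge\lambda_0(0)\norm{f}^2$. This is the min-max characterization of the lowest eigenvalue, applied to $f$, which lies in the operator domain. Multiplying by $-1/t_\ast<0$ and using the identity,
\[
\inner{f}{\Gamma f}\le-\frac{\lambda_0(0)}{t_\ast}\norm{f}^2 .
\]
This is the order relation $\Gamma\le-(\lambda_0(0)/t_\ast)P$ in \eqref{eq:matrix-affine-bound}. Since $\lambda_0(0)>0$ and $t_\ast>0$, every eigenvalue of the Hermitian $r\times r$ matrix $(\Gamma)_{\alpha\beta}$ is at most $-\lambda_0(0)/t_\ast<0$, so $\Gamma$ is negative definite on $\Ran P$.

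\textbf{The interpretation.} The ray starts at a configuration in $\Omega$, where the reduced spectrum is positive, so the hypothesis of the theorem holds there. As recalled after \eqref{eq:matrix-crossing-form}, the eigenvalues of $\Gamma$ are the first derivatives at $t_\ast$ of the analytic critical branches (Kato). All of these derivatives are strictly negative, so every branch passes through zero transversally and downward. The crossing is therefore regular in the radial direction, including when the critical eigenspace is degenerate.

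\textbf{Main obstacle.} The only delicate point is domain bookkeeping. The pairing $\inner{f}{Vg}$ must make sense, and the affine splitting must hold on the common domain. I would settle this by noting that $\Ran P\subset H^2$ and that $V=-\Pi_0\partial_\mu\ad_{B_\mu}\Pi_0$ maps $H^2$ into $L^2$, so the splitting can be applied vector-wise with no form-theoretic subtleties.
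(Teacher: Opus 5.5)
Your proof is correct and follows the paper's argument. You compress the kernel equation $\widehat{\Mfp}_{t_\ast}P=0$ through the affine splitting to obtain $PVP=-t_\ast^{-1}P\widehat{\Mfp}_0P$, then apply the lower bound $\widehat{\Mfp}_0\ge\lambda_0(0)\one$ on $\Ran P$. Your matrix-element formulation, the domain bookkeeping, and the appeal to Kato for the branch velocities make explicit what the paper's two-line proof leaves implicit.
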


\begin{proof}
Since $\widehat{\Mfp}_{t_\ast}P=0$,
\begin{equation}
  P\widehat{\Mfp}_0P+t_\ast PVP=0.
  \label{eq:compressed-zero}
\end{equation}
Solving for $PVP$ gives the equality in
Eq.~\eqref{eq:matrix-affine-bound}. The lower bound on
$\widehat{\Mfp}_0$ then gives the stated operator inequality on $\Ran P$.
\end{proof}

For the first zero $t_\ast$ of the lowest eigenvalue, concavity also implies the
one-sided estimate
\begin{equation}
  D^-\lambda_0(t_\ast)
  \le-\frac{\lambda_0(0)}{t_\ast}<0.
  \label{eq:concavity-derivative-bound}
\end{equation}
Equation~\eqref{eq:matrix-affine-bound} contains more information at a
degenerate crossing because it bounds every direction in the critical block.
Neither estimate excludes tangent vectors satisfying
Eq.~\eqref{eq:wall-tangent-space}; rather, both exclude zero velocity in the
affine radial direction specified in the theorem. A nonlinear path, or a path
whose base point lies outside $\Omega$, may instead have tangential contact
with the wall. 

\subsection{Spectral flow and the normal-Hessian index}
\label{subsec:spectral-flow}

We assign positive spectral flow to an eigenvalue that enters the negative
spectrum as the path parameter increases. If a regular crossing has crossing-
form eigenvalues $\gamma_\alpha\ne0$, the corresponding change in the number
of negative eigenvalues is
\begin{equation}
  \Delta n_-
  =-\sum_{\alpha=1}^{r}\sgn\gamma_\alpha.
  \label{eq:spectral-flow}
\end{equation}
For the affine rays covered by Theorem~\ref{thm:affine-ray}, every
$\gamma_\alpha$ is negative, and therefore
\begin{equation}
  \Delta n_-=r.
  \label{eq:downward-index-jump}
\end{equation}
Because $\widehat{\Mfp}$ is the normal Hessian of the norm functional, this
spectral-flow event is simultaneously a jump of $r$ in the normal Morse index.
Together with Theorem~\ref{thm:focal-index}, this identification yields the
local correspondence
\begin{equation}
  \begin{gathered}
  \text{additional FP zero mode}
  \ \longleftrightarrow\
  \text{additional focal direction}
  \\
  \longleftrightarrow\
  \text{normal-index jump}
  \ \longleftrightarrow\
  \text{spectral-flow event}
  \end{gathered}.
  \label{eq:local-dictionary}
\end{equation}
Our sign convention in Eq.~\eqref{eq:spectral-flow} is the negative of the
convention often adopted in the mathematical spectral-flow literature
~\cite{RobbinSalamon1995}.

\section{Resolvent normal forms}
\label{sec:resolvent}

The crossing form also fixes the leading singular part of the reduced
resolvent. Let $\delta=t-t_\ast$, let $f_0$ be the normalized zero mode, and set
$P=\dyad{f_0}{f_0}$ and $Q=\one-P$. With respect to the fixed decomposition
$\Ran P\oplus\Ran Q$ at $t=t_\ast$, the operator blocks satisfy
\begin{align}
  P\widehat{\Mfp}_tP
  &=\delta\Gamma P+O(\delta^2),
  \label{eq:simple-critical-block}\\
  Q\widehat{\Mfp}_tP
  &=\delta QV_BP+O(\delta^2),
  \label{eq:simple-offdiag}\\
  Q\widehat{\Mfp}_tQ
  &=Q\widehat{\Mfp}_{t_\ast}Q+O(\delta).
  \label{eq:simple-complement}
\end{align}
Because the zero eigenvalue is simple and isolated, the last block is separated
from zero and has a bounded inverse in a neighborhood of the crossing. The
Schur complement on the critical subspace is then
\begin{equation}
  S(\delta)
  =P\widehat{\Mfp}_tP
   -P\widehat{\Mfp}_tQ
    (Q\widehat{\Mfp}_tQ)^{-1}
    Q\widehat{\Mfp}_tP
  =\delta\Gamma P+O(\delta^2).
  \label{eq:simple-schur}
\end{equation}
Both off-diagonal factors are of order $\delta$, so their self-energy
contribution starts at order $\delta^2$ and does not modify the residue of the
leading pole. Block inversion therefore gives
\begin{equation}
  \widehat{\Mfp}_t^{-1}
  =\frac{P}{\delta\Gamma}+O(1).
  \label{eq:simple-laurent}
\end{equation}
For a source $J_t$ that is continuous at $t=t_\ast$, only its projection onto
the critical mode contributes to the singular term:
\begin{equation}
  \widehat{\Mfp}_t^{-1}J_t
  =\frac{\ket{f_0}\inner{f_0}{J_{t_\ast}}}
         {\Gamma(t-t_\ast)}+O(1).
  \label{eq:sourced-simple-pole}
\end{equation}
Equation~\eqref{eq:oblique-pole-preview} follows by taking
$J=-\partial\!\cdot a$. In particular, the inverse operator has a pole even
when a chosen source does not detect it; the singular contribution to the
sourced solution vanishes precisely when the source is orthogonal to $f_0$.

\subsection{Exact rank-one Feshbach equation}
\label{subsec:rank-one-feshbach}

The same cancellation can be expressed through an exact scalar Feshbach
equation without following the moving eigenvector. Freeze $f_0$ at the crossing
and define
\begin{equation}
  v_t=Q\widehat{\Mfp}_tf_0.
  \label{eq:feshbach-v}
\end{equation}
Whenever $Q(\widehat{\Mfp}_t-z)Q$ is invertible, eliminating the complementary
component reduces the eigenvalue problem to
\begin{equation}
  h(t,z)=z,
  \qquad
  h(t,z)=
  \inner{f_0}{\widehat{\Mfp}_tf_0}
  -\inner{v_t}{
      (Q\widehat{\Mfp}_tQ-z)^{-1}v_t}.
  \label{eq:feshbach-scalar}
\end{equation}
Implicit differentiation along a critical root $z=\lambda(t)$ yields
\begin{equation}
  \dot\lambda(t)
  =\frac{\partial_t h(t,\lambda(t))}
         {1-\partial_z h(t,\lambda(t))}.
  \label{eq:feshbach-derivative}
\end{equation}
At threshold, $v_{t_\ast}=0$ because $f_0$ lies in the kernel of the full
operator. Consequently,
\begin{equation}
  \partial_z h(t_\ast,0)=0,
  \qquad
  \partial_t h(t_\ast,0)=\Gamma,
  \label{eq:feshbach-threshold}
\end{equation}
so the frozen-state spectral-weight factor in the denominator of
Eq.~\eqref{eq:feshbach-derivative} equals one at the crossing. Away from
threshold,
\begin{equation}
  \partial_z h(t,z)
  =-\norm{(Q\widehat{\Mfp}_tQ-z)^{-1}v_t}^2\le0.
  \label{eq:feshbach-z-derivative}
\end{equation}
These identities describe the local algebra of a spectral projection. They
should not be identified with the nonlocal horizon functional that enters the
quantum measure.

\subsection{Regular degenerate crossings}
\label{subsec:regular-degenerate}

The first Schur reduction extends directly to a critical subspace of dimension
$r$. Suppose $\dim\Ker\widehat{\Mfp}_{t_\ast}=r$, let $P$ project onto that
kernel, and set $Q=\one-P$. Eliminating the complementary subspace gives the
effective matrix
\begin{equation}
  E(t,z)
  =P(\widehat{\Mfp}_t-z)P
  -P\widehat{\Mfp}_tQ
   \bigl[Q(\widehat{\Mfp}_t-z)Q\bigr]^{-1}
   Q\widehat{\Mfp}_tP.
  \label{eq:degenerate-feshbach}
\end{equation}
At $(t,z)=(t_\ast,0)$, the off-diagonal couplings vanish. Hence, with
$\delta=t-t_\ast$,
\begin{equation}
  E(t,0)=\delta\Gamma+O(\delta^2),
  \qquad
  \Gamma=P\dot{\widehat{\Mfp}}_{t_\ast}P.
  \label{eq:degenerate-leading}
\end{equation}
If $\Gamma$ is invertible on $\Ran P$, the crossing is regular in every
critical direction, and block inversion gives the operator-valued residue
\begin{equation}
  \widehat{\Mfp}_t^{-1}
  =\frac{1}{t-t_\ast}P\Gamma^{-1}P+O(1).
  \label{eq:regular-degenerate-residue}
\end{equation}
The residue operator has rank $r$. For a fixed source,
\begin{equation}
  \widehat{\Mfp}_t^{-1}J_t
  =\frac{1}{t-t_\ast}P\Gamma^{-1}PJ_{t_\ast}+O(1),
  \label{eq:regular-degenerate-source}
\end{equation}
and the singular term vanishes exactly when $PJ_{t_\ast}=0$. Here $PJ$ is a
vector in the critical space; the rank-$r$ statement refers to the residue
operator.

Every crossing covered by Theorem~\ref{thm:affine-ray} is regular in this sense
along its radial path. Path regularity is not solely a property of the endpoint,
however: another path through the same configuration may be tangent to the
wall, and a general analytic family may satisfy $\Ker\Gamma\ne\{0\}$.

\subsection{Tangential paths and the second Schur reduction}
\label{subsec:tangential}

When $\Gamma$ has a kernel, replacing $\Gamma^{-1}$ with its Moore-Penrose
pseudoinverse does not recover the full singular resolvent. The diagonal family
\begin{equation}
  E(\delta)=
  \begin{pmatrix}
    \delta&0\\
    0&\delta^2
  \end{pmatrix}
  \label{eq:pseudoinverse-counterexample}
\end{equation}
already demonstrates the failure: although
$\Gamma=\operatorname{diag}(1,0)$, the inverse in the second component has a
double pole that the pseudoinverse of $\Gamma$ cannot encode.

Off-diagonal coupling within the critical space introduces an additional
effect. Let $P_0$ project onto $\Ker\Gamma$ and set $P_1=P-P_0$. In the
decomposition $\Ran P=\Ran P_1\oplus\Ran P_0$, write the effective operator
$E(\delta)\equiv E(t_\ast+\delta,0)$ as
\begin{equation}
  E(\delta)=
  \begin{pmatrix}
    E_{11}(\delta)&E_{10}(\delta)\\
    E_{01}(\delta)&E_{00}(\delta)
  \end{pmatrix}.
  \label{eq:critical-two-block}
\end{equation}
Since the first derivative is $\Gamma$, its restriction to $\Ran P_1$ is
invertible, whereas all first-order matrix elements involving $\Ran P_0$
vanish. It follows that
\begin{equation}
  E_{11}(\delta)=\delta\Gamma_1+O(\delta^2),
  \quad
  E_{10}(\delta)=O(\delta^2),
  \quad
  E_{00}(\delta)=O(\delta^2),
  \label{eq:critical-block-orders}
\end{equation}
where $\Gamma_1=\Gamma|_{\Ran P_1}$ is invertible. After the complementary
ghost space has been eliminated in Eq.~\eqref{eq:degenerate-feshbach}, a second
Schur reduction removes the linearly crossing part of the critical space. The
operator governing the tangential sector is
\begin{equation}
  F_0(\delta)
  =E_{00}(\delta)
   -E_{01}(\delta)E_{11}(\delta)^{-1}E_{10}(\delta).
  \label{eq:second-schur}
\end{equation}
Because $E_{01}$ and $E_{10}$ start at order $\delta^2$, while
$E_{11}^{-1}$ starts at order $\delta^{-1}$, the correction in
Eq.~\eqref{eq:second-schur} begins at order $\delta^3$. It can therefore alter
the leading coefficient, and may alter the pole order, whenever the direct
$P_0$ compression begins at third or higher order.

\begin{proposition}[Tangential pole order]
\label{prop:tangential-pole}
Assume that, after the two reductions above,
\begin{equation}
  F_0(\delta)
  =\delta^mK_m+O(\delta^{m+1}),
  \qquad
  m\ge2,
  \qquad
  K_m:\Ran P_0\to\Ran P_0\ \text{invertible}.
  \label{eq:tangential-leading}
\end{equation}
Then
\begin{equation}
  P_0\widehat{\Mfp}_{t_\ast+\delta}^{-1}P_0
  =\delta^{-m}K_m^{-1}
   +O(\delta^{-(m-1)}).
  \label{eq:tangential-pole}
\end{equation}
\end{proposition}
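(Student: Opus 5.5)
Two nested applications of the Schur-complement inversion formula give the result. First, by the Feshbach--Schur identity underlying Eq.~\eqref{eq:degenerate-feshbach} at $z=0$, whenever $Q\widehat{\Mfp}_tQ$ is invertible on $\Ran Q$ (true for small $\delta$, since the zero eigenvalue is isolated and $Q\widehat{\Mfp}_tQ=Q\widehat{\Mfp}_{t_\ast}Q+O(\delta)$ has a uniformly bounded inverse), the compression of the inverse to the critical space is exactly
\begin{equation}
  P\widehat{\Mfp}_{t_\ast+\delta}^{-1}P=E(\delta)^{-1}\quad\text{on }\Ran P,
\end{equation}
provided $E(\delta)$ is invertible. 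So it suffices to show that $E(\delta)$ is invertible for small $\delta\neq0$ and that $P_0E(\delta)^{-1}P_0=\delta^{-m}K_m^{-1}+O(\delta^{-(m-1)})$. The identity itself is standard block inversion and needs only bounded-inverse bookkeeping for the unbounded operator on its domain.

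Second, apply block inversion to the $2\times2$ decomposition \eqref{eq:critical-two-block} of the finite-dimensional matrix $E(\delta)$. Since $E_{11}(\delta)=\delta(\Gamma_1+O(\delta))$ with $\Gamma_1$ invertible, a Neumann series gives $E_{11}^{-1}=\delta^{-1}\Gamma_1^{-1}+O(1)$ for small $\delta\ne0$. The Schur complement of $E_{11}$ is $F_0(\delta)$ in Eq.~\eqref{eq:second-schur}. The assumption \eqref{eq:tangential-leading} gives $F_0=\delta^m(K_m+O(\delta))$, hence $F_0$ is invertible for small $\delta\neq0$ with
\begin{equation}
  F_0(\delta)^{-1}=\delta^{-m}\bigl(K_m+O(\delta)\bigr)^{-1}=\delta^{-m}K_m^{-1}+O(\delta^{-(m-1)}),
\end{equation}
again by a Neumann series. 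Invertibility of both $E_{11}$ and $F_0$ implies invertibility of $E$, and the standard formula gives that the $(0,0)$ block of $E^{-1}$ is exactly $F_0^{-1}$. This is precisely $P_0E^{-1}P_0$, which proves Eq.~\eqref{eq:tangential-pole}.

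The main obstacle is confirming that no other contribution pollutes the $P_0P_0$ block at order $\delta^{-m}$. In the exact Schur formulas the $(0,0)$ block of the inverse is exactly $F_0^{-1}$, so there is nothing to estimate beyond the Neumann-series remainder. The orders in \eqref{eq:critical-block-orders} matter only to know that $F_0$ is well defined and that the hypothesis on $F_0$ is meaningful. The off-diagonal blocks of $E^{-1}$, such as $-E_{11}^{-1}E_{10}F_0^{-1}$, are of order $\delta^{-1}\delta^2\delta^{-m}=\delta^{1-m}$. These are subleading, but they are not needed for the stated claim. Finally, I would remark that the $O$-terms should be understood in operator norm on the finite-dimensional space $\Ran P_0$, and that analyticity of the family (affine or general analytic) is what makes the expansions of $E$ valid.
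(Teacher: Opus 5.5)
Your proposal is correct and follows the paper's own proof: identify the $P_0P_0$ block of the inverse with $F_0(\delta)^{-1}$ through the nested Schur complements, then invert $F_0(\delta)=\delta^mK_m[\one+O(\delta)]$ by a Neumann series. Your version spells out steps the paper's two-line proof leaves implicit, namely the first Feshbach identity $P\widehat{\Mfp}_{t_\ast+\delta}^{-1}P=E(\delta)^{-1}$ and the invertibility of $E_{11}(\delta)$ and $F_0(\delta)$ for small $\delta\neq0$.
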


\begin{proof}
The $P_0P_0$ block of the inverse of Eq.~\eqref{eq:critical-two-block} is
$F_0(\delta)^{-1}$. Factoring
$F_0(\delta)=\delta^mK_m[\one+O(\delta)]$ and applying a Neumann expansion
proves Eq.~\eqref{eq:tangential-pole}.
\end{proof}

The finite-dimensional family
\begin{equation}
  E(\delta)=
  \begin{pmatrix}
    \delta&\delta^2\\
    \delta^2&\delta^4
  \end{pmatrix}
  \label{eq:second-schur-example}
\end{equation}
shows why the second reduction cannot be omitted. Although the direct
tangential compression is $E_{00}=\delta^4$, the Schur complement is
$F_0=\delta^4-\delta^3=-\delta^3[1-\delta]$. The tangential inverse block thus
has a cubic rather than a quartic pole. If distinct tangential branches vanish
to different orders, the same procedure applies after resolving the associated
analytic subspaces; the collection of branch orders then replaces the single
exponent $m$.

\section{Symmetry, degeneracy, and expected codimension}
\label{sec:symmetry}

The matrix crossing form inherits every symmetry preserved by the operator
family along the path. Let a compact group $G$ act orthogonally on the reduced
ghost space and suppose
\begin{equation}
  U(g)\widehat{\Mfp}_tU(g)^{-1}=\widehat{\Mfp}_t
  \quad\text{for all }g\in G
  \label{eq:symmetry-commutation}
\end{equation}
near the crossing. Here $G$ denotes a symmetry group of the background and the
operator; it need not coincide with a gauge stabilizer.

\begin{lemma}[Real symmetric Schur lemma]
\label{lem:real-schur}
Let $V$ be a finite-dimensional real inner-product space carrying an
irreducible orthogonal representation of $G$. If a symmetric operator
$X:V\to V$ commutes with the representation, then $X=x\one_V$ for some real
$x$.
\end{lemma}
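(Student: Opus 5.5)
The plan is to use the spectral theorem for the symmetric operator $X$ in place of the complex eigenvalue argument. Over $\mathbb R$, an arbitrary commuting operator need not have a real eigenvalue; the rotation generator commuting with $SO(2)$ is the standard counterexample. Symmetry is therefore the essential hypothesis, and it supplies at least one real eigenvalue.

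First, since $X$ is symmetric on the finite-dimensional real inner-product space $V$, the finite-dimensional spectral theorem provides a real eigenvalue $x$. The corresponding eigenspace $W=\Ker(X-x\one_V)$ is then nonzero.

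Second, $W$ is $G$-invariant. Indeed, if $w\in W$ and $g\in G$, commutation gives
\begin{equation*}
  (X-x\one_V)\,U(g)w=U(g)\,(X-x\one_V)\,w=0,
\end{equation*}
so $U(g)w\in W$.

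Third, irreducibility of the representation on $V$ means the only invariant subspaces are $\{0\}$ and $V$. Since $W\neq\{0\}$, we conclude $W=V$, and therefore $X=x\one_V$.

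No step is a genuine obstacle. The only point needing care is the first, namely securing a real eigenvalue, and the symmetry of $X$ resolves it directly. Orthogonality of the representation is not needed for this argument. It matters only where the lemma is applied, because it guarantees that $U(g)$ maps the symmetric crossing form to another symmetric operator.
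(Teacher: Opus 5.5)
Your proof is correct and follows essentially the same route as the paper: the real spectral theorem supplies a nonzero real eigenspace of $X$, commutation makes that eigenspace $G$-invariant, and irreducibility forces it to be all of $V$. Your side remarks are also accurate: the $SO(2)$ rotation generator shows that symmetry of $X$ is essential, and orthogonality of the representation is not used in the lemma itself.
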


\begin{proof}
By the real spectral theorem, $V$ is the orthogonal sum of the eigenspaces of
$X$. Commutation with the representation makes each eigenspace invariant under
$G$. Any chosen eigenspace is nonzero, and irreducibility forces it to equal
all of $V$; hence $X$ has a single eigenvalue and is a scalar operator.
\end{proof}

\begin{theorem}[Symmetry-protected critical multiplet]
\label{thm:symmetry-protected}
Under Eq.~\eqref{eq:symmetry-commutation}, suppose the real critical space
$\Ran P$ is irreducible under $G$. Then
\begin{equation}
  \Gamma=P\dot{\widehat{\Mfp}}_{t_\ast}P
  =\gamma P.
  \label{eq:scalar-crossing-form}
\end{equation}
If $\gamma\ne0$, the singular resolvent is
\begin{equation}
  \widehat{\Mfp}_t^{-1}
  =\frac{P}{\gamma(t-t_\ast)}+O(1).
  \label{eq:symmetric-residue}
\end{equation}
For an affine ray from $\Omega$, Theorem~\ref{thm:affine-ray} gives
$\gamma\le-\lambda_0(0)/t_\ast<0$.
\end{theorem}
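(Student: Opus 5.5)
The plan has three steps: show that $\Gamma$ commutes with the $G$-action, invoke Lemma~\ref{lem:real-schur}, and then specialize the resolvent expansion \eqref{eq:regular-degenerate-residue} and Theorem~\ref{thm:affine-ray} to a scalar crossing form.

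\emph{Step 1: invariance of the critical space.} Since $U(g)$ commutes with $\widehat{\Mfp}_{t_\ast}$, it maps $\Ker\widehat{\Mfp}_{t_\ast}$ into itself. Because $U(g)$ is orthogonal, it also preserves the orthogonal complement. Hence $U(g)P=PU(g)$.

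\emph{Step 2: invariance of the derivative.} Differentiating \eqref{eq:symmetry-commutation} in $t$ at $t_\ast$ gives $U(g)\dot{\widehat{\Mfp}}_{t_\ast}U(g)^{-1}=\dot{\widehat{\Mfp}}_{t_\ast}$. This is justified because the commutation holds for all $t$ near $t_\ast$ and $U(g)$ is independent of $t$. In the affine setting \eqref{eq:affine-fp-family}, the derivative is simply $V_B=(\widehat{\Mfp}_t-\widehat{\Mfp}_{t_\ast})/(t-t_\ast)$, so no limiting argument is needed.

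\emph{Step 3: $\Gamma$ is scalar.} Steps 1 and 2 give $U(g)\Gamma U(g)^{-1}=\Gamma$, so $\Gamma|_{\Ran P}$ commutes with the representation restricted to $\Ran P$. The operator $\Gamma$ is symmetric on the finite-dimensional real space $\Ran P$. Its matrix entries $\inner{f_\alpha}{\dot{\widehat{\Mfp}}f_\beta}$ are well defined because the $f_\alpha$ are eigenvectors in $H^2$ and the derivative is a first-order operator. Symmetry follows from self-adjointness of the affine family, since $V_B$ is a difference quotient of self-adjoint operators. Lemma~\ref{lem:real-schur} then gives $\Gamma=\gamma P$ with $\gamma$ real, which is \eqref{eq:scalar-crossing-form}.

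\emph{Step 4: residue and sign.} If $\gamma\neq0$, then $\Gamma$ is invertible on $\Ran P$ with $\Gamma^{-1}=\gamma^{-1}P$. Substituting into the regular degenerate expansion \eqref{eq:regular-degenerate-residue}, derived from the Schur complement \eqref{eq:degenerate-leading}, gives \eqref{eq:symmetric-residue}. For an affine ray, \eqref{eq:matrix-affine-bound} becomes $\gamma P\le-(\lambda_0(0)/t_\ast)P$. Pairing with any unit $f\in\Ran P$ yields $\gamma\le-\lambda_0(0)/t_\ast<0$, which in particular confirms $\gamma\neq0$ in that case.

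The only delicate point is the hypothesis that $\Ran P$ is irreducible as a \emph{real} representation. Over the reals, the commutant of an irreducible representation can be $\mathbb C$ or $\mathbb H$ rather than $\mathbb R$. Lemma~\ref{lem:real-schur} avoids this by using symmetry of $X$: all eigenvalues are real and the eigenspaces are invariant. The proof must therefore explicitly verify that $\Gamma$ is symmetric, as in Step 3, and not merely equivariant.
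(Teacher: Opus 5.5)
Your proposal is correct and follows the paper's proof essentially step for step. You show that $P$ and $\dot{\widehat{\Mfp}}_{t_\ast}$ commute with $U(g)$, use symmetry of the compression together with Lemma~\ref{lem:real-schur} to get $\Gamma=\gamma P$, and substitute $\Gamma^{-1}=\gamma^{-1}P$ into Eq.~\eqref{eq:regular-degenerate-residue}. Beyond the paper's proof, you make the sign bound explicit by pairing Eq.~\eqref{eq:matrix-affine-bound} with a unit vector in $\Ran P$, and you rightly note that symmetry of $\Gamma$, not mere equivariance, is what makes the real Schur lemma apply.
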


\begin{proof}
Equation~\eqref{eq:symmetry-commutation} implies that the spectral projector
$P$ commutes with $U(g)$, and differentiating that equation shows that
$\dot{\widehat{\Mfp}}_{t_\ast}$ does as well. Their compression to $\Ran P$ is
symmetric and intertwines the restricted representation. Lemma~\ref{lem:real-schur}
therefore proves Eq.~\eqref{eq:scalar-crossing-form}, while
Eq.~\eqref{eq:symmetric-residue} follows from
Eq.~\eqref{eq:regular-degenerate-residue}.
\end{proof}

For an unrestricted real symmetric $r\times r$ critical block, requiring an
$r$-fold eigenvalue to lie at zero imposes $r(r+1)/2$ real conditions: every
traceless component must vanish, and the remaining scalar eigenvalue must also
vanish~\cite{vonNeumannWigner1929}. In a Faddeev-Popov family, this dimension
count is only an \emph{expected generic codimension}. It applies when the
compressed perturbation map is transverse,
\begin{equation}
  B\longmapsto P\,\delta_B\widehat{\Mfp}[A]\,P
  \quad\text{into }\operatorname{Sym}(\Ran P).
  \label{eq:compressed-perturbation-map}
\end{equation}
Locality, the transversality constraint on $B$, and representation-theoretic
selection rules can reduce the image of this map, so the unrestricted matrix
count is not by itself a codimension theorem for the space of gauge fields.
Within a symmetry-preserving family for which $\Ran P$ is irreducible,
Eq.~\eqref{eq:scalar-crossing-form} leaves only one scalar parameter. A whole
critical multiplet may then cross zero on a codimension-one wall of that
restricted family.

Selection rules can protect degeneracy without making the critical space
irreducible. Conserved abelian quantum numbers may force the compressed matrix
to be block diagonal, whereas a discrete exchange symmetry may require the
block eigenvalues to coincide. The resulting critical space can be reducible
even when the full crossing matrix is proportional to the identity. The
examples below realize both mechanisms.

\section{A projected single-harmonic test}
\label{sec:mathieu}

Consider the transverse $SU(2)$ background on the $2\pi$-periodic torus
$T^2$,
\begin{equation}
  A_\mu^a(x)
  =a\,\delta^{a3}\delta_{\mu2}\cos x_1.
  \label{eq:single-harmonic-background}
\end{equation}
This family is reducible: the constant Cartan generator $T^3$ is covariantly
constant, and the neutral color sector remains free. Its charged sector is
nevertheless a useful finite-volume realization of the crossing formulas,
provided that the projection and the residual degeneracy are kept explicit.

Combining the two real charged ghost components into
$\phi=\psi^1+\iu\psi^2$ and using the Fourier expansion
\begin{equation}
  \phi(x_1,x_2)
  =\sum_{p,q\in\mathbb Z}\phi_{p,q}
   e^{\iu(px_1+qx_2)}
  \label{eq:charged-fourier}
\end{equation}
gives, in one charged-helicity convention,
\begin{equation}
  (\Mfp\phi)_{p,q}
  =(p^2+q^2)\phi_{p,q}
  +\frac{aq}{2}\bigl(\phi_{p-1,q}+\phi_{p+1,q}\bigr).
  \label{eq:mathieu-block}
\end{equation}
Thus, for each fixed $q$, the charged Faddeev-Popov operator reduces to a
tridiagonal Mathieu-type operator on the $p$ ladder. The opposite charged
convention is obtained by replacing $a$ with $-a$.

The smallest symmetric truncation to $q=1$ and $p=0,\pm1$ gives
\begin{equation}
  M_3=
  \begin{pmatrix}
    1&a/2&a/2\\
    a/2&2&0\\
    a/2&0&2
  \end{pmatrix}.
  \label{eq:three-state}
\end{equation}
The antisymmetric combination of the $p=\pm1$ states decouples, whereas the
even subspace carries
\begin{equation}
  M_{\mathrm{even}}=
  \begin{pmatrix}
    1&a/\sqrt2\\
    a/\sqrt2&2
  \end{pmatrix},
  \label{eq:two-state-even}
\end{equation}
whose lower eigenvalue vanishes at $a=2$. Symmetric truncation of the full
ladder to $\abs{p}\le N$ shifts this root to a stable limiting value:
$a_\ast(3)=1.9053248$, while $a_\ast(5)$ and $a_\ast(40)$ differ by only
$1.7\times10^{-11}$. At $N=40$, the critical data in the projected complex
$q=1$ block, quoted to six decimal places, are
\begin{equation}
  a_\ast=1.905316,
  \qquad
  \lambda_1(a_\ast)-\lambda_0(a_\ast)=1.714198,
  \qquad
  \Gamma=\dot\lambda_0(a_\ast)=-0.737092.
  \label{eq:mathieu-values}
\end{equation}
The nonzero gap establishes that the null vector is simple within this
projected block. It does not establish simplicity of the full real charged
operator.

The remaining momentum sectors can be controlled without further
diagonalization. If $S$ denotes the bilateral shift on the $p$ ladder, then
$\norm{S+S^\ast}\le2$, and Eq.~\eqref{eq:mathieu-block} implies the quadratic-form
bound
\begin{equation}
  \Mfp_q
  \ge q^2-\abs{aq}
  =\abs{q}\bigl(\abs{q}-\abs{a}\bigr).
  \label{eq:q-sector-bound}
\end{equation}
Because $a_\ast<2$, all sectors with $\abs{q}\ge2$ are strictly positive at the
crossing. The $q=0$ block is free once its constant mode has been removed, so
only $q=\pm1$ can be critical. A simple complex null vector contributes two
real charged directions in each of these momentum sectors; consequently, the
full real charged critical space has dimension four. Translation in $x_2$
prevents mixing between $q=1$ and $q=-1$, Cartan rotations act within each real
charged doublet, and the discrete charge-momentum symmetry equates the two
block velocities. The degeneracy is therefore protected by selection rules,
but it belongs to a reducible background and falls outside the irreducibility
hypothesis of Theorem~\ref{thm:focal-index}.

Figure~\ref{fig:mathieu} displays the crossing and its slope. The
Feynman-Hellmann matrix element was compared with a centered five-point
derivative of the computed eigenvalue over the plotted range. This comparison
checks the implementation of the projected matrix problem; it supplies no
independent evidence for the full gauge-orbit geometry.

\begin{figure}[t]
  \centering
  \includegraphics[width=\linewidth]{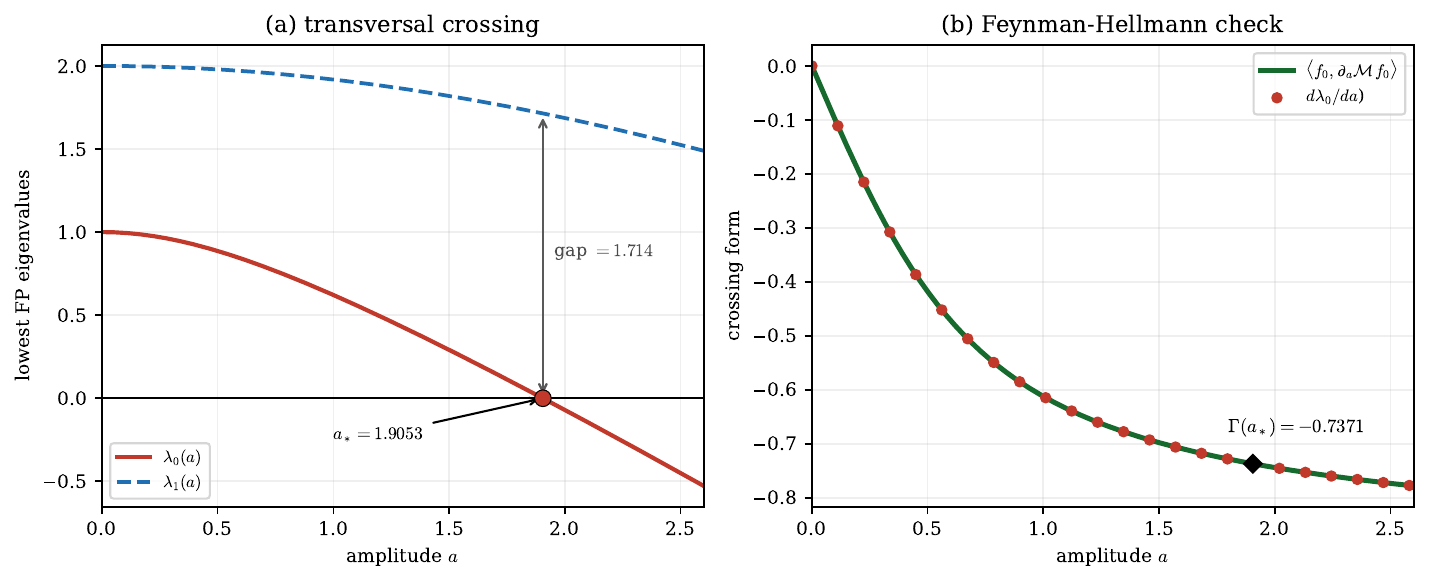}
  \caption{Projected single-harmonic calculation in the complex $q=1$ block.
  Panel (a) shows the zero of the lowest eigenvalue at
  $a_\ast\simeq1.905316$ and the nonzero gap to the next eigenvalue in the
  same block. Panel (b) compares the Feynman-Hellmann matrix element
  $\inner{f_0}{(\partial_a\Mfp)f_0}$ with a centered five-point derivative of
  $\lambda_0(a)$; at the crossing, $\Gamma\simeq-0.737092$. The calculation is
  finite dimensional after the $p$-ladder cutoff and concerns a reducible,
  sector-projected background.}
  \label{fig:mathieu}
\end{figure}

\section{The \texorpdfstring{$SU(2)$}{SU(2)} hedgehog and continuum thresholds}
\label{sec:hedgehog}

We next consider the radial family on $\mathbb R^3$,
\begin{equation}
  A_i^c(x)=\varepsilon_{cij}x_jh(r),
  \qquad
  r=\abs{\bm x}.
  \label{eq:hedgehog-background}
\end{equation}
Antisymmetry of $\varepsilon_{cij}$ gives transversality for every differentiable
radial profile:
\begin{equation}
  \partial_iA_i^c
  =\varepsilon_{cij}\delta_{ij}h
   +\varepsilon_{cij}x_jx_i\frac{h'}{r}
  =0.
  \label{eq:hedgehog-transverse}
\end{equation}
For the decaying profile used below, each radial operator approaches the free
radial Laplacian at infinity and has essential spectrum beginning at zero. The
normalizable zero modes are therefore threshold eigenstates rather than
isolated eigenvalues separated by a positive gap. The compact-manifold
resolvent theorems established above cannot be transferred verbatim to this
continuum problem. We use the continuum family to obtain exact threshold data
and compare those data with discrete eigenvalue crossings in a Dirichlet radial
box. The box supports the spectral statements for its one-dimensional
Sturm-Liouville operators, but, in the absence of boundary conditions for the
full gauge field and gauge group, it is not a realization of the complete
Singer geometry discussed in Sections~\ref{sec:setting} and \ref{sec:local}.

Introduce the color-spin and orbital generators
\begin{equation}
  (S_i)_{ac}=-\iu\varepsilon_{iac},
  \qquad
  L_i=-\iu\varepsilon_{ijk}x_j\partial_k.
  \label{eq:spin-orbit-generators}
\end{equation}
Substitution of Eq.~\eqref{eq:hedgehog-background} into the Faddeev-Popov
vertex yields
\begin{equation}
  \varepsilon^{abc}A_i^b\partial_i\omega^c
  =-h(r)(\bm S\!\cdot\!\bm L\,\omega)^a,
  \label{eq:hedgehog-vertex}
\end{equation}
and hence
\begin{equation}
  \Mfp=-\partial^2+h(r)\bm S\!\cdot\!\bm L.
  \label{eq:hedgehog-normal-form}
\end{equation}
This operator commutes with $\bm J=\bm L+\bm S$, $\bm L^2$, and the radial
Hamiltonian. On a channel labeled by $(J,L)$,
\begin{equation}
  \bm S\!\cdot\!\bm L
  =c_{JL}
  =\frac12\bigl[J(J+1)-L(L+1)-2\bigr],
  \label{eq:cJL}
\end{equation}
with angular multiplicity $2J+1$. Writing the radial wave function as
$R(r)=u(r)/r$ reduces the eigenvalue equation to
\begin{equation}
  -u''(r)+\frac{L(L+1)}{r^2}u(r)
  +c_{JL}h(r)u(r)=\lambda u(r).
  \label{eq:radial-channel}
\end{equation}
The channels used below are collected in Table~\ref{tab:channels}. The
$(J,L)=(1,0)$ channel has $c_{10}=0$ and is consequently independent of the
hedgehog amplitude.

\begin{table}[t]
\centering
\caption{Selected channels of the hedgehog Faddeev-Popov operator. Parity is
$(-1)^L$, the angular multiplicity is $2J+1$, and the last column is the radial
centrifugal term in Eq.~\eqref{eq:radial-channel}.}
\label{tab:channels}
\begin{tabular}{ccccc}
\toprule
$J^P$ & $L$ & $c_{JL}$ & multiplicity & centrifugal term\\
\midrule
$0^-$ & $1$ & $-2$ & $1$ & $2/r^2$\\
$1^+$ & $0$ & $0$  & $3$ & $0$\\
$1^-$ & $1$ & $-1$ & $3$ & $2/r^2$\\
$1^+$ & $2$ & $-3$ & $3$ & $6/r^2$\\
$2^-$ & $1$ & $+1$ & $5$ & $2/r^2$\\
$2^+$ & $2$ & $-1$ & $5$ & $6/r^2$\\
$2^-$ & $3$ & $-4$ & $5$ & $12/r^2$\\
\bottomrule
\end{tabular}
\end{table}

The diagonal $SO(3)$ action that rotates space and color simultaneously is an
operator symmetry; it is not a gauge stabilizer. Irreducibility of a nonzero
hedgehog background follows independently from the integrability condition for
a covariantly constant adjoint scalar. If $D_i\xi=0$, then
\begin{equation}
  [F_{ij},\xi]=0.
  \label{eq:irreducibility-integrability}
\end{equation}
On the positive $x_3$ axis, the curvature contains the color components
\begin{equation}
  F_{12}^3=-2h+r^2h^2,
  \qquad
  F_{23}^1=F_{31}^2=-(2h+rh').
  \label{eq:hedgehog-curvature-axis}
\end{equation}
For the profile in Eq.~\eqref{eq:henyey-profile} and $g\ne0$, these coefficients
are simultaneously nonzero on a sufficiently small interval with $r>0$.
There the curvature spans all three color directions, so
Eq.~\eqref{eq:irreducibility-integrability} forces $\xi$ to vanish. Covariant
constancy then extends the vanishing throughout the connected domain. The
nonzero threshold backgrounds considered below are therefore irreducible,
whereas $g=0$ is the reducible free connection.

\subsection{A closed-form threshold state in every coupled channel}
\label{subsec:henyey}

Set the core scale to unity and take
\begin{equation}
  h(r)=g\phi(r),
  \qquad
  \phi(r)=\frac{9r}{(r^3+1)^2}.
  \label{eq:henyey-profile}
\end{equation}
The coefficient in $\phi(r)$ fixes the coupling normalization; the channel
minimization below then places the first positive threshold at $g=1$. Following
Henyey's construction, we prescribe a radial mode and solve the zero-energy
equation for the background~\cite{Henyey1979}. Consider
\begin{equation}
  u_L(r)=\frac{r^{L+1}}{(r^3+1)^p}.
  \label{eq:henyey-ansatz}
\end{equation}
Direct differentiation gives
\begin{equation}
  \frac{u_L''}{u_L}-\frac{L(L+1)}{r^2}
  =
  \frac{-6p(L+2)r(r^3+1)+9p(1+p)r^4}
       {(r^3+1)^2}.
  \label{eq:henyey-log-curvature}
\end{equation}
Matching this radial dependence to Eq.~\eqref{eq:henyey-profile} requires the
$r^4$ coefficient to vanish, which fixes
\begin{equation}
  p=\frac{2L+1}{3}.
  \label{eq:henyey-p}
\end{equation}
The remaining terms then reduce to the identity
\begin{equation}
  \frac{u_L''}{u_L}-\frac{L(L+1)}{r^2}
  =-2(2L+1)(L+2)\frac{r}{(r^3+1)^2}.
  \label{eq:henyey-identity}
\end{equation}
At zero energy, the left-hand side equals
$c_{JL}g\phi(r)=9gc_{JL}r/(r^3+1)^2$. Consequently, every coupled channel
$c_{JL}\ne0$ has the explicit threshold state and amplitude
\begin{equation}
  u_L(r)=
  \frac{r^{L+1}}{(r^3+1)^{(2L+1)/3}},
  \qquad
  g_\ast(J,L)=
  -\frac{2(2L+1)(L+2)}{9c_{JL}}.
  \label{eq:henyey-zero-and-wall}
\end{equation}
For $L\ge1$, the behavior $u_L(r)\sim r^{L+1}$ near the origin and
$u_L(r)\sim r^{-L}$ at infinity places the state in
$L^2(\mathbb R_+,\dd r)$. Equation~\eqref{eq:henyey-zero-and-wall} therefore
identifies one normalizable threshold state in every coupled channel but does
not determine the remainder of the radial spectrum.

Because $u_L$ is strictly positive for $0<r<\infty$, it has no interior node.
The Sturm oscillation theorem identifies it as the radial ground state at the
threshold amplitude~\cite{Zettl2005}. Along the signed direction for which
$gc_{JL}<0$, the radial quadratic form decreases monotonically with the
magnitude of the coupling. Hence no negative radial eigenvalue can precede this
nodeless zero-energy state, and $g_\ast(J,L)$ is the first threshold in its
channel.

A core parameter $\alpha>0$ may be restored by replacing $r^3+1$ with
$r^3+\alpha$. The same differentiation gives
\begin{equation}
  \frac{u_L''}{u_L}-\frac{L(L+1)}{r^2}
  =-2(2L+1)(L+2)\alpha
    \frac{r}{(r^3+\alpha)^2}.
  \label{eq:henyey-alpha}
\end{equation}
Under $r=\alpha^{1/3}\rho$, the dimensionless coupling is $g/\alpha$, and
$g_\ast^{(\alpha)}=\alpha g_\ast^{(1)}$. The choice $\alpha=1$ therefore fixes
the radial unit without restricting the family.

For color spin $S=1$ and a fixed orbital angular momentum $L\ge1$, the allowed
total angular momenta are $J=L-1,L,L+1$. Equation~\eqref{eq:cJL} gives
\begin{equation}
  c_{L-1,L}=-(L+1),
  \qquad
  c_{L,L}=-1,
  \qquad
  c_{L+1,L}=L.
  \label{eq:three-c-families}
\end{equation}
Substitution into Eq.~\eqref{eq:henyey-zero-and-wall} separates the thresholds
into three families:
\begin{align}
  g_\ast^{J=L-1}
  &=\frac{2(2L+1)(L+2)}{9(L+1)},
  \label{eq:g-family-minus}\\
  g_\ast^{J=L}
  &=\frac{2(2L+1)(L+2)}{9},
  \label{eq:g-family-zero}\\
  g_\ast^{J=L+1}
  &=-\frac{2(2L+1)(L+2)}{9L}.
  \label{eq:g-family-plus}
\end{align}
The first two families are attractive for $g>0$, whereas the third is
attractive for $g<0$. Their absolute values increase with $L$. For the first
and third families, this follows from the decompositions
\begin{equation}
  \frac{(2L+1)(L+2)}{L+1}
  =2L+3-\frac{1}{L+1},
  \qquad
  \frac{(2L+1)(L+2)}{L}
  =2L+5+\frac{2}{L},
  \label{eq:family-monotonicity}
\end{equation}
whose discrete increments are positive for $L\ge1$; monotonicity of the middle
family follows directly from its quadratic numerator. The thresholds closest
to the free connection are therefore
\begin{equation}
  \min_{L\ge1}g_\ast^{J=L-1}=1,
  \qquad
  \min_{L\ge1}g_\ast^{J=L}=2,
  \qquad
  \max_{L\ge1}g_\ast^{J=L+1}=-2.
  \label{eq:family-extrema}
\end{equation}
The first positive threshold is the $(J,L)=(0,1)$ singlet at $g=1$, and the
first negative threshold is the $(J,L)=(2,1)$ quintet at $g=-2$. Repulsive
channels cannot produce a negative eigenvalue, and the nodeless-state argument
excludes an earlier threshold in every attractive channel. Every radial channel
is therefore nonnegative throughout $-2\le g\le1$, with normalizable
zero-energy threshold states at the two endpoints. We call the open interval
between them the threshold-stability interval:
\begin{equation}
  \mathcal I_{\mathrm{thr}}=(-2,1).
  \label{eq:threshold-interval}
\end{equation}
Because zero is the bottom of the essential spectrum, this is a threshold
stability interval. It is not a positive-gap counterpart of the compact region
$\Omega$.

\Needspace{0.42\textheight}
Table~\ref{tab:henyey-walls} records representative members of the three
families. The table is illustrative; the interval in
Eq.~\eqref{eq:threshold-interval} follows from the minimization over all
$L\ge1$, rather than from this finite list.

\begin{table}[htbp]
\centering
\caption{Selected closed-form thresholds of the hedgehog family. The positive
endpoint of Eq.~\eqref{eq:threshold-interval} is the $0^-$ singlet, and the
negative endpoint is the $2^-$ quintet.}
\label{tab:henyey-walls}
\begin{tabular}{cccccc}
\toprule
$J^P$ & $L$ & $c_{JL}$ & $g_\ast$ & $2J+1$ & $u_L(r)$\\
\midrule
$0^-$ & $1$ & $-2$ & $1$ & $1$ & $r^2/(r^3+1)$\\
$1^+$ & $2$ & $-3$ & $40/27$ & $3$ &
  $r^3/(r^3+1)^{5/3}$\\
$2^-$ & $3$ & $-4$ & $35/18$ & $5$ &
  $r^4/(r^3+1)^{7/3}$\\
$1^-$ & $1$ & $-1$ & $2$ & $3$ & $r^2/(r^3+1)$\\
$2^+$ & $2$ & $-1$ & $40/9$ & $5$ &
  $r^3/(r^3+1)^{5/3}$\\
$2^-$ & $1$ & $+1$ & $-2$ & $5$ & $r^2/(r^3+1)$\\
\bottomrule
\end{tabular}
\end{table}

\subsection{Dirichlet-box comparison and convergence}
\label{subsec:radial-numerics}

To connect the continuum threshold problem with a discrete finite-volume
spectrum, we impose $u(0)=u(R)=0$ on Eq.~\eqref{eq:radial-channel}. A uniform
grid and the three-point approximation to $-u''$ produce a real symmetric
tridiagonal matrix. Figure~\ref{fig:hedgehog-walls} uses
\begin{equation}
  R=40,
  \qquad
  \Delta r=0.02,
  \label{eq:radial-figure-grid}
\end{equation}
for which the lowest-eigenvalue crossings in the singlet, quintet, and triplet
channels occur at
\begin{equation}
  g_{\ast,R}(0^-)=0.999975,
  \qquad
  g_{\ast,R}(2^-)=-1.999950,
  \qquad
  g_{\ast,R}(1^+)=1.481393.
  \label{eq:boxed-roots}
\end{equation}
These values are finite-box approximations to $1$, $-2$, and $40/27$,
respectively. The shaded interval in the figure is instead fixed by the exact
all-channel minimization in Eq.~\eqref{eq:family-extrema}.

\begin{figure}[t]
  \centering
  \includegraphics[width=0.9\linewidth]{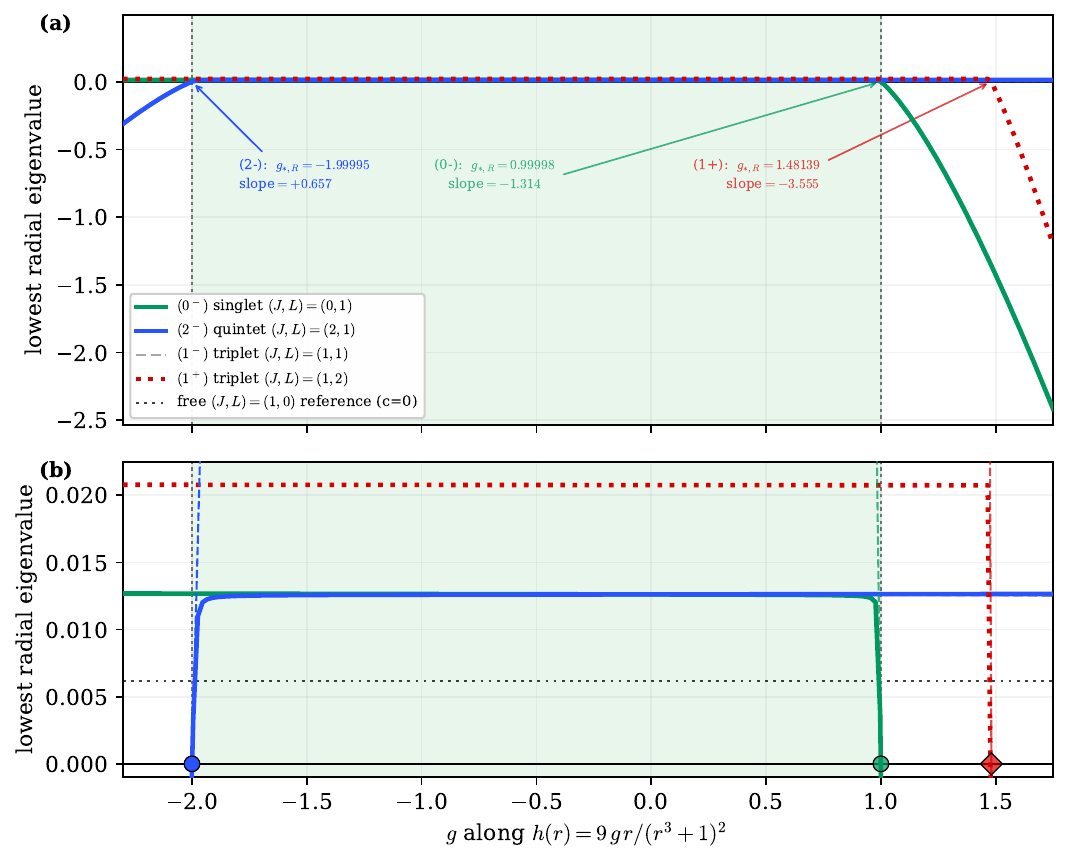}
  \caption{Selected hedgehog channels in a Dirichlet radial box with $R=40$
  and $\Delta r=0.02$. The curves and short tangent segments are finite-box
  eigenvalues and centered finite-difference estimates of the
  Feynman-Hellmann slopes. The shaded interval $-2<g<1$ is the
  independent continuum result obtained by minimizing over all spin-orbit
  channels. The boxed singlet, quintet, and triplet roots approach $1$, $-2$,
  and $40/27$, while the uncoupled $(J,L)=(1,0)$ channel remains free. The plot
  is a selected-channel comparison, not a diagonalization of an unrestricted
  angular operator or a construction of finite-domain gauge-orbit geometry.}
  \label{fig:hedgehog-walls}
\end{figure}

\Needspace{0.52\textheight}
Table~\ref{tab:hedgehog-convergence} separates the two numerical limits. The
first three rows vary $R$ at fixed $\Delta r=0.02$; the last two rows, together
with the shared $(R,\Delta r)=(40,0.02)$ row, vary the mesh at fixed $R=40$.
This arrangement prevents a simultaneous change in $R$ and $\Delta r$ from
being interpreted as evidence for either limit separately.

\begin{table}[htbp]
\centering
\caption{Controlled finite-box comparison of the singlet, quintet, and
triplet wall amplitudes. The radius scan holds $\Delta r$ fixed, and the mesh
scan holds $R$ fixed; the $(40,0.02)$ entry belongs to both scans.}
\label{tab:hedgehog-convergence}
\begin{tabular}{ccccc}
\toprule
$R$ & $\Delta r$ & $g_{\ast,R}(0^-)$ &
$g_{\ast,R}(2^-)$ & $g_{\ast,R}(1^+)$\\
\midrule
\multicolumn{5}{l}{Radius scan at fixed $\Delta r=0.02$}\\
$20$ & $0.02$ & $1.000303$ & $-2.000605$ & $1.481396$\\
$40$ & $0.02$ & $0.999975$ & $-1.999950$ & $1.481393$\\
$80$ & $0.02$ & $0.999934$ & $-1.999868$ & $1.481393$\\
\addlinespace
\multicolumn{5}{l}{Mesh scan at fixed $R=40$; the $\Delta r=0.02$ row is shared}\\
$40$ & $0.04$ & $0.999760$ & $-1.999520$ & $1.481126$\\
$40$ & $0.01$ & $1.000029$ & $-2.000058$ & $1.481459$\\
\midrule
continuum & - & $1$ & $-2$ & $40/27$\\
\bottomrule
\end{tabular}
\end{table}

The radius scan shows that the singlet and quintet are more sensitive to tail
truncation than the triplet at this mesh spacing. At $R=40$, the triplet errors
for $\Delta r=0.04$, $0.02$, and $0.01$ are approximately
$3.55\times10^{-4}$, $8.88\times10^{-5}$, and $2.21\times10^{-5}$; each halving
of the grid spacing reduces the absolute error by a factor close to four.
Successive mesh differences show the same second-order behavior in all three
channels. The singlet and quintet errors need not decrease monotonically because
finite-box and finite-difference contributions partially cancel. These scans
are therefore consistent with convergence to the analytic thresholds and with
the stated discretization order; solver tolerances remain implementation checks
rather than separate physical evidence.

\subsection{Continuum crossing quotients and boxed comparisons}
\label{subsec:hedgehog-crossing}

At a Dirichlet-box crossing, the Feynman-Hellmann derivative with respect to
$g$ is
\begin{equation}
  \Gamma_{JL,R}
  =c_{JL}
  \frac{\displaystyle\int_0^R\phi(r)u_{JL,R}(r)^2\,\dd r}
       {\displaystyle\int_0^R u_{JL,R}(r)^2\,\dd r}.
  \label{eq:boxed-crossing-quotient}
\end{equation}
For the threshold states in Eq.~\eqref{eq:henyey-zero-and-wall}, both continuum
integrals converge, and the corresponding Rayleigh quotient is
\begin{equation}
  \Gamma_{JL}^{(\infty)}
  =c_{JL}
  \frac{\displaystyle\int_0^\infty\phi(r)u_L(r)^2\,\dd r}
       {\displaystyle\int_0^\infty u_L(r)^2\,\dd r}.
  \label{eq:continuum-crossing-quotient}
\end{equation}
At the continuum threshold, this quotient is not the derivative of an isolated
analytic eigenvalue because the zero mode lies at the lower edge of the
essential spectrum. Standard Dirichlet Sturm-Liouville convergence on
expanding intervals identifies the nodeless box states with $u_L$ on compact
subintervals~\cite{Zettl2005}; convergence of the normalized quotients over the
expanding domains additionally requires uniform $L^2$ control of their tails.
Equation~\eqref{eq:continuum-crossing-quotient} is therefore used as the
analytic continuum comparison for Eq.~\eqref{eq:boxed-crossing-quotient}.

The change of variables $t=r^3$ reduces the required integrals to Euler beta
functions. For the $0^-$ singlet,
\begin{equation}
  \Gamma_{0,1}^{(\infty)}
  =-\frac{9\sqrt3}{4\pi}
  =-1.240490\ldots,
  \label{eq:singlet-gamma}
\end{equation}
whereas the $(J,L)=(1,2)$ triplet gives
\begin{equation}
  \Gamma_{1,2}^{(\infty)}
  =-63B\left(\frac83,\frac83\right)
  =-3.553945\ldots.
  \label{eq:triplet-gamma}
\end{equation}
For the $(J,L)=(2,1)$ quintet,
\begin{equation}
  \Gamma_{2,1}^{(\infty)}
  =+\frac{9\sqrt3}{8\pi}
  =+0.620245\ldots.
  \label{eq:quintet-gamma}
\end{equation}
The last sign refers to increasing $g$. When the outward negative ray is
parametrized by $t=-g$, its velocity is
$-\Gamma_{2,1}^{(\infty)}<0$, consistent with the affine-ray bound in every
finite box. Appendix~\ref{app:beta} gives the integral evaluation.

\subsection{Protected multiplets and a normalized representation-space illustration}
\label{subsec:triplet}

The threshold at $g=1$ is a $J=0$ singlet, the threshold at $g=-2$ is a $J=2$
quintet, and the next positive threshold, at $g=40/27$, is the $(J,L)=(1,2)$
triplet. In a rotationally invariant box, the quintet and triplet carry real
irreducible representations of the diagonal $SO(3)$. Their crossing matrices
are therefore scalar by Theorem~\ref{thm:symmetry-protected}; the scalar
continuum limits are given by Eqs.~\eqref{eq:quintet-gamma} and
\eqref{eq:triplet-gamma}.

For the triplet, the full angular product space is
$L=2\otimes S=1$ and has dimension fifteen. Exact angular-momentum addition
gives
\begin{equation}
  \spec(\bm S\!\cdot\!\bm L)
  =\{-3\ (J=1,\ 3\text{-fold}),\
     -1\ (J=2,\ 5\text{-fold}),\
     +2\ (J=3,\ 7\text{-fold})\}.
  \label{eq:SL-spectrum}
\end{equation}
Let $P_1$ denote the projector onto the $J=1$ subspace, with basis ordered by
$m=(-1,0,+1)$. Projection of two exact angular operators gives
\begin{align}
  D_{\mathrm{even}}
  &=P_1(3S_3L_3-\bm S\!\cdot\!\bm L)P_1
  \notag\\
  &=\operatorname{diag}\left(-\frac35,\frac65,-\frac35\right)
   =-\frac35\operatorname{diag}(1,-2,1),
  \label{eq:even-triplet-splitting}\\
  D_{\mathrm{odd}}
  &=P_1J_3P_1
   =\operatorname{diag}(-1,0,+1).
  \label{eq:odd-triplet-splitting}
\end{align}
The first matrix is invariant under $m\leftrightarrow-m$ and displays the
$3\to2\oplus1$ decomposition under axial $O(2)$. The second changes sign under
that reflection and separates the two components of the remaining doublet.

These matrices define a representation-space parametrization of the allowed
first-order diagonal patterns,
\begin{equation}
  \Gamma_{\mathrm{eff}}(\epsilon,\eta)
  =\gamma\one_3
   +\kappa_{\mathrm e}\epsilon D_{\mathrm{even}}
   +\kappa_{\mathrm o}\eta D_{\mathrm{odd}}
   +O\!\left(\lVert(\epsilon,\eta)\rVert^2\right),
  \label{eq:effective-triplet-form}
\end{equation}
with $\gamma$ fixed by Eq.~\eqref{eq:triplet-gamma}. Angular momentum algebra
determines the matrix patterns but does not determine the radial coefficients
$\kappa_{\mathrm e}$ and $\kappa_{\mathrm o}$. Their calculation would require
an explicit transverse deformation $B_i^a(x)$ and the projection of
$-\partial_i\ad_{B_i}$ onto the full radial-angular threshold states. No such
deformation is constructed here.

Figure~\ref{fig:triplet} adopts $\kappa_{\mathrm e}=\kappa_{\mathrm o}=1$ to
display the two angular components; $\epsilon$ and $\eta$ are dimensionless
plotting parameters.

\begin{figure}[htbp]
  \centering
  \includegraphics[width=\linewidth]{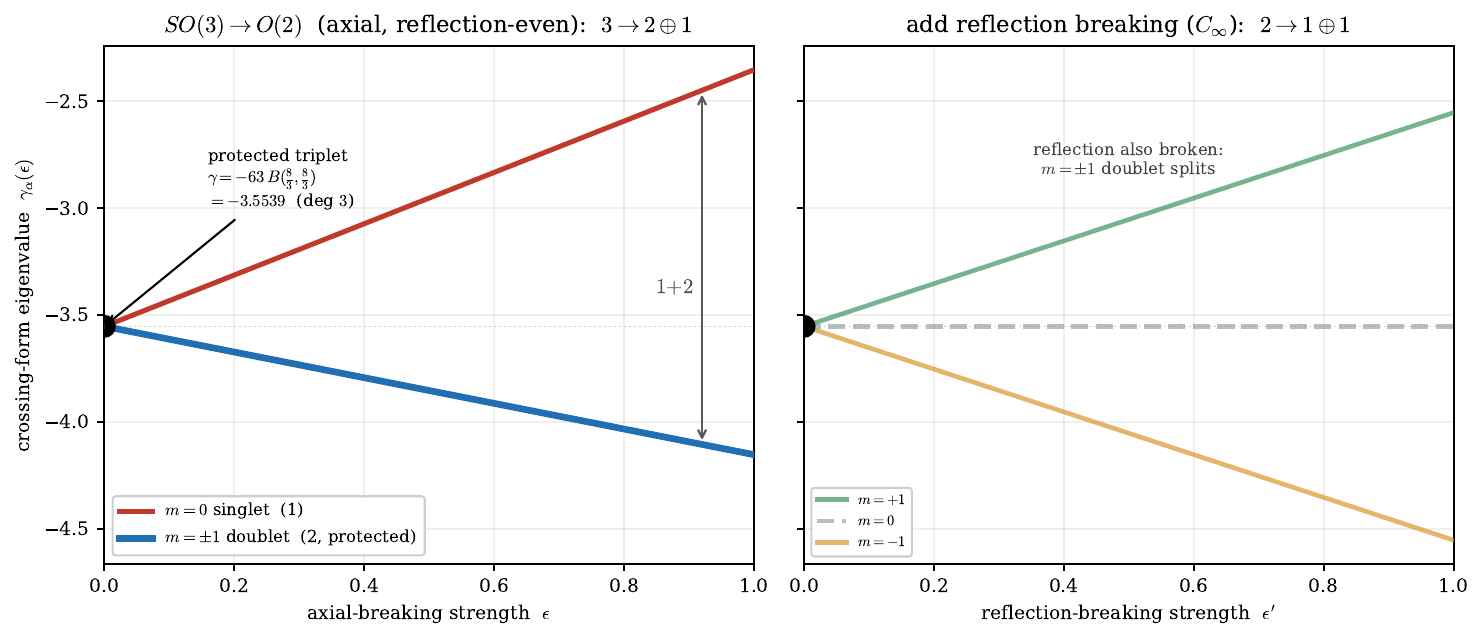}
  \caption{Normalized representation-space illustration for the $J=1$, $L=2$
  triplet. The common intercept is the continuum quotient
  $\gamma=-63B(8/3,8/3)$. The reflection-even matrix
  $D_{\mathrm{even}}$ produces an $m=0$ singlet and an $m=\pm1$ doublet; the
  reflection-odd matrix $D_{\mathrm{odd}}$, plotted separately about the same
  intercept, separates the doublet. The convention
  $\kappa_{\mathrm e}=\kappa_{\mathrm o}=1$ fixes only the graphical
  normalization. The figure is not a diagonalization of a radially deformed
  Faddeev-Popov operator.}
  \label{fig:triplet}
\end{figure}

\section{Discussion}
\label{sec:discussion}

The reduced formulation distinguishes the fixed global-color kernel from the
additional kernel that defines the nontrivial horizon. Constant ghosts make the
critical set of the orbit-norm functional Morse-Bott, so its normal Hessian acts
on $\Hilb_0$. The focal index, however, must be computed from the full pencil
along the segment; reduction is performed only at the endpoint through
Eq.~\eqref{eq:focal-endpoint-splitting}. This ordering prevents an intermediate
compression from introducing tangencies that do not belong to the geometric
pencil.

The distinction between $\Delta_A$ and $\Mfp[A]$ then acquires a direct
geometric interpretation. The inverse covariant Laplacian extracts the
orthogonal orbit component and enters the curvature of the orthogonal
connection. The Faddeev-Popov operator is instead the Hessian associated with
the fixed Landau slice and becomes singular when the orbit is tangent to that
slice. Equation~\eqref{eq:two-operator-identity-b} relates the operators but
does not identify their kernels: the lower-order term records the contribution
of the second fundamental form after pullback to ghost parameters. This
separation prevents a local ghost pole from being mistaken for a failure of
the orthogonal connection or for Singer's global topological obstruction.

At an isolated crossing, the critical projector and the crossing form provide
the finite-dimensional data needed to describe the local singularity. For a
simple zero, $B\mapsto\Gamma_A(B;f_0)$ is the conormal to the spectral wall;
for a degenerate zero, the eigenvalues of $P\delta_B\Mfp P$ are the velocities
of the analytic branches. The same pair $(P,\Gamma)$ determines the leading
resolvent coefficient $P\Gamma^{-1}P$ whenever the crossing form is invertible.
The source projection remains part of the statement: an operator pole may be
present even when a chosen source has no component in the critical space. The
rank-one and matrix Feshbach reductions explain why coupling to the
noncritical spectrum does not renormalize the first-order coefficient at
threshold, since the off-diagonal coupling vanishes there.

When $\Gamma$ has a kernel, its Moore-Penrose pseudoinverse accounts only for
the regular, linearly crossing part of the critical space, so the analysis of
a tangential path requires additional information. Eliminating the complementary spectrum
must be followed by a second Schur reduction that removes this regular critical
subspace. The first invertible coefficient of the resulting operator
$F_0(\delta)$ fixes the pole order in the tangential sector. The finite-matrix
example in Eq.~\eqref{eq:second-schur-example} shows that off-diagonal coupling
can lower the order inferred from the direct tangential compression. Pole
orders are therefore properties of the twice-reduced analytic family, not of
$\Ker\Gamma$ alone.

Affinity supplies a separate constraint on crossings reached from the positive
region. Concavity of the lowest eigenvalue recovers convexity of $\Omega$, while
Eq.~\eqref{eq:matrix-affine-bound} acts on the entire critical block and forces
every radial branch velocity to be negative. This operator identity includes
symmetry-protected multiplets and requires no simplicity assumption. It does
not prohibit a different path through the same configuration from being
tangent to the wall, so the affine-ray theorem and the tangential reduction
address complementary path geometries.

The hedgehog family provides the continuum application of the operator
framework. The
closed expression in Eq.~\eqref{eq:henyey-zero-and-wall} gives one normalizable,
nodeless threshold state in every coupled spin-orbit channel of a single
background family. Nodelessness identifies the first threshold of each radial
operator, and the explicit $J,L$ dependence permits minimization over the
complete channel tower. The interval $-2<g<1$ is consequently an analytic
all-channel statement; it is not inferred from the selected curves in
Fig.~\ref{fig:hedgehog-walls}. Its endpoints have different representation
content: the positive ray reaches a $J=0$ singlet, whereas the negative ray
reaches a symmetry-protected $J=2$ quintet.

The compact and continuum results concern different spectral regimes. On
$\mathbb R^3$, the hedgehog zero modes lie at the lower edge of the essential
spectrum, and the isolated-eigenvalue Laurent expansion used on a compact
manifold is not available without a limiting prescription. Dirichlet boxes
restore a discrete Sturm-Liouville spectrum and allow the crossing forms to be
computed before taking $R\to\infty$. Their roots approach the exact thresholds,
but the radial box alone does not specify gauge-compatible boundary conditions
for the complete configuration space. It therefore connects the spectral
calculations across the finite- and infinite-volume settings without claiming
to reproduce the full finite-domain Singer geometry.

The two auxiliary examples test different components of the framework. The
charged Mathieu block checks the finite-volume crossing identity and projected
spectral convergence in a reducible sector, whereas the triplet matrices give
the exact angular patterns allowed by the residual symmetries. Determining the
physical triplet slopes requires an explicit transverse deformation and matrix
elements evaluated on the full radial-angular threshold states.

The finite-volume formulation suggests a direct comparison with numerical
gauge-field studies. Given an interpolation $A_t$ through a discretized
horizon, one can compute the low-mode projector, the compressed derivative
$P\dot{\Mfp}P$, and the response of $\Mfp^{-1}J$ for sources with controlled
critical overlap. Equation~\eqref{eq:matrix-affine-bound} then supplies a sign
and magnitude test for any affine interpolation that starts inside $\Omega$.
The free starting gap scales as $L^{-2}$, so the bound weakens with increasing
volume and does not imply a volume-independent crossing slope.

Several extensions require additional structure beyond the assumptions used
here. Gauge-compatible boundary conditions would connect the radial box to a
full finite-domain orbit geometry. An explicit axial deformation of the
hedgehog would determine $\kappa_{\mathrm e}$ and $\kappa_{\mathrm o}$ in
Eq.~\eqref{eq:effective-triplet-form}. Reducible strata require a joint treatment
of gauge stabilizers and the residual global-color critical manifold.
Background and linear covariant gauges may also lose the self-adjoint structure
on which the crossing-form and min-max arguments rely, although
BRST-invariant and background-independent horizon constructions provide
possible settings for a modified analysis
~\cite{CapriUniversal2018,JustoPereiraSobreiro2022}. The present work is
Euclidean and elliptic; the Lorentzian no-pole proposal of
~\cite{GuimaraesLorentzian2026} replaces the spectral boundary by a
real-time boundary-value problem and therefore lies outside its operator
framework.

A full lattice realization of orbit-space geometry would require more
than the radial Dirichlet discretization used here. Explicit coordinates,
the metric, the inverse metric, and the Laplace-Beltrami operator have
been constructed for finite rectangular $SU(2)$ lattices with open
boundary conditions in Ref.~\cite{LauferOrland2013}.

\section{Conclusion}
\label{sec:conclusion}

On the reduced ghost space of a compact Landau-gauge problem, a regular isolated
Gribov-horizon crossing is governed by a critical projector and an invertible
symmetric crossing form. These data orient the spectral-flow event, determine
the normal Morse-index jump, and fix the leading singular part of the sourced
ghost resolvent; for a one-dimensional critical space, they also specify the
conormal to a simple wall. The covariant Laplacian defining
the orthogonal connection remains distinct from the Faddeev-Popov Hessian,
while the full affine focal pencil relates the negative index of the latter to
additional focal directions after the residual global-color multiplicity is
separated at the endpoint.

Affine rays from the positive region have negative-definite crossing matrices,
including at symmetry-protected multiplets. General analytic paths may instead
be tangent to the wall; in that case, the pole order is determined only after
the regular critical directions have been eliminated by a second Schur
complement. This distinction separates the universal local operator statements
from path-dependent higher-order contact.

For the $SU(2)$ hedgehog family, a closed-form threshold state exists in every
coupled spin-orbit channel. Minimization over all channels yields the exact
threshold-stability interval $(-2,1)$, bounded by singlet and quintet zero
modes. Dirichlet boxes approach these continuum values and provide a controlled
finite-volume comparison, while the projected Mathieu and triplet examples
illustrate, respectively, selection-rule degeneracy and symmetry-allowed
splitting patterns. The scope remains local: the quantum horizon measure and
global gauge fixing require additional input. Within that scope, the
projectors, crossing matrices, and finite-volume limits are directly testable
near a discretized Gribov boundary.

\appendix

\section{Beta-function evaluation of the hedgehog crossing quotients}
\label{app:beta}

The continuum quotients in Eq.~\eqref{eq:continuum-crossing-quotient} reduce to
Euler beta functions through the substitution $t=r^3$. We use
\begin{equation}
  \int_0^\infty \frac{t^{x-1}}{(1+t)^{x+y}}\,\dd t
  =B(x,y).
  \label{eq:beta-definition}
\end{equation}
For $L=1$, the normalization integral is
\begin{equation}
  \int_0^\infty u_1(r)^2\,\dd r
  =\int_0^\infty\frac{r^4}{(r^3+1)^2}\,\dd r
  =\frac13B\left(\frac53,\frac13\right)
  =\frac{4\pi}{9\sqrt3}.
  \label{eq:u1-denominator}
\end{equation}
The corresponding numerator is
\begin{equation}
  \int_0^\infty\phi(r)u_1(r)^2\,\dd r
  =3B(2,2)=\frac12.
  \label{eq:u1-numerator}
\end{equation}
Multiplication of their ratio by $c_{01}=-2$ gives
Eq.~\eqref{eq:singlet-gamma}; using $c_{21}=+1$ instead gives
Eq.~\eqref{eq:quintet-gamma}.

For $L=2$, the two integrals are
\begin{align}
  \int_0^\infty u_2(r)^2\,\dd r
  &=\frac13B\left(\frac73,1\right)=\frac17,
  \label{eq:u2-denominator}\\
  \int_0^\infty\phi(r)u_2(r)^2\,\dd r
  &=3B\left(\frac83,\frac83\right).
  \label{eq:u2-numerator}
\end{align}
Since $c_{12}=-3$, the ratio is
$-63B(8/3,8/3)$, as stated in Eq.~\eqref{eq:triplet-gamma}.

\end{document}